\newtheorem{theorem}{Theorem}
\newtheorem{lemma}{Lemma}
\newtheorem{remark}{Remark}
\newtheorem{cor}{Corollary}
\begin{document}

\title{On Zero Delay Source-Channel Coding }

\author{Emrah~Akyol,~\IEEEmembership{Student Member,~IEEE,}
        Kumar~Viswanatha,~\IEEEmembership{Student Member,~IEEE,}
           Kenneth~Rose,~\IEEEmembership{Fellow,~IEEE,}
        and~Tor~Ramstad,~\IEEEmembership{Member,~IEEE}
\thanks{Emrah Akyol, Kumar Viswanatha and Kenneth Rose are with the Department
of Electrical and Computer Engineering, University of California, Santa Barbara,
CA, 93106 USA, e-mail: \{eakyol, kumar, rose\} @ece.ucsb.edu}
\thanks{ Tor Ramstad is with Dept. of Electronics and Telecommunications, Norwegian University of Science and Technology, Trondheim, Norway, email: ramstad@iet.ntnu.no}
\thanks{This work is supported in part by the NSF under the grants  CCF-0728986, CCF-1016861, CCF-1118075. The material in this paper was presented
in part at the IEEE Information Theory Workshop, Jan 2010 and the IEEE Data Compression Conference, March 2010. }}

\markboth{Submitted to IEEE Transactions  on Information Theory}%
{Akyol \MakeLowercase{\textit{et al.}}: On Zero Delay Source-Channel Coding }

\maketitle

\begin{abstract}
In this paper, we study the zero-delay source-channel coding problem, and specifically the problem of obtaining the vector transformations that optimally map between the $m$-dimensional source space and the $k$-dimensional channel space, under a given transmission power constraint and for the mean square error distortion.  We first study the functional properties of this problem and show that the objective is concave in the source and noise densities and convex in the density of the input to the channel. We then derive the necessary conditions for optimality of the encoder and decoder mappings.  A well known result in information theory pertains to the linearity of  optimal encoding and decoding mappings  in the scalar Gaussian source and channel setting, at all channel signal-to-noise ratios (CSNRs). In this paper,  we study this result more generally, beyond the Gaussian source and channel, and derive the necessary and sufficient condition for linearity of optimal mappings, given a noise (or source) distribution, and a specified power constraint. We also prove that the Gaussian source-channel pair is unique in the sense that it is the only source-channel pair for which the optimal mappings are linear at more than one CSNR values. Moreover, we show the asymptotic linearity of optimal mappings for low CSNR if the channel is Gaussian regardless of the source and, at the other extreme, for high CSNR if the source is Gaussian, regardless of the channel. Our numerical results show strict improvement over  prior methods. The numerical approach is extended to the scenario of source-channel coding with decoder side information. The resulting encoding mappings are shown to be continuous relatives of, and in fact subsume as special case, the Wyner-Ziv mappings encountered in digital distributed source coding systems. 

\end{abstract}

\begin{keywords}
Joint source channel coding, analog communications, estimation, distributed coding.
\end{keywords}

\IEEEpeerreviewmaketitle

\section{Introduction}
{A} fascinating result in information theory is that uncoded transmission of Gaussian samples, over a channel with additive white Gaussian noise (AWGN), is optimal in the sense that it yields the minimum achievable mean square error (MSE) between source and reconstruction \cite{goblick}. This result demonstrates the potential of joint source-channel coding: Such a simple scheme, at no delay, provides the performance of the asymptotically optimal separate source and channel coding system, without recourse to complex compression and channel coding schemes that require asymptotically long  delays. However, it is understood that the best source channel coding system at fixed  finite delay may not, in general, achieve Shannon's asymptotic coding bound (see e.g. \cite{coverbook}). 

Clearly, the problem of obtaining the optimal scheme for a given finite delay is an important open problem with considerable practical implications. There are two main approaches to the practical problem of transmitting a discrete time continuous alphabet source over a discrete time additive noise channel: ``analog communication" via direct amplitude modulation, and ``digital communication" which typically consists of quantization, error control coding and digital modulation. The main advantage (and hence proliferation) of digital over analog communication is due to advanced quantization and error control techniques, as well as the prevalence of digital processors. However, there are two notable shortcomings: First, error control coding (and to some extent also source coding) usually incurs substantial delay to achieve good performance. The other problem involves limited robustness of digital systems to varying channel conditions, due to underlying quantization or error protection assumptions. The performance saturates due to quantization as the channel signal to noise ratio (CSNR) increases beyond the regime for which the system was designed. Also, it is difficult to obtain ``graceful degradation" of digital systems with decreasing CSNR, when it falls below the minimum requirement of the error correction code in use. Further, such threshold effects become more pronounced as the system performance approaches the theoretical optimum. Analog systems offer the potential to avoid these problems. As an important example, in applications where significant delay is acceptable, a hybrid approach  (i.e., vector quantization + analog mapping) was proposed and analyzed \cite{mittal2002hybrid, skoglund2006hybrid} to circumvent the impact of CSNR mismatch where, for simplicity, linear mappings were used and hence no optimality claims made. Perhaps more importantly, in many applications delay is a paramount consideration. Analog coding schemes are low complexity alternatives to digital methods, providing a ``zero-delay" transmission which is suitable for such applications.

\begin{figure*}    \centering            \includegraphics[scale=0.35]{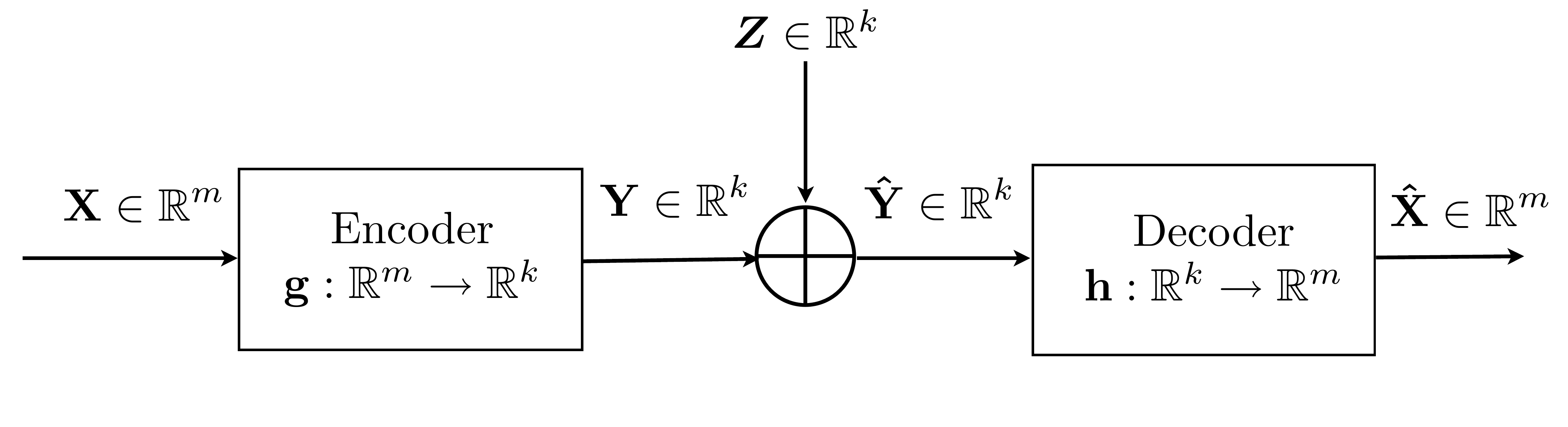}    \label{fig:pre2}    
\caption{A general block-based point-to-point communication system}
\end{figure*}

  There are no known explicit methods to obtain such analog mappings for a general source and channel, nor is the optimal mapping known, in closed form, for other than the trivial case of the scalar Gaussian source-channel pair. Among the few practical analog coding schemes that have appeared in the literature are those based on the use of space-filling curves for bandwidth compression, originally proposed more than 50 years ago by Shannon \cite{shannon1949cpn} and Kotelnikov \cite{kotelnikov}. These were then extended in the work of Fuldseth and Ramstad \cite{fuldseth}, Chung \cite{chung}, Vaishampayan and Costa \cite{vaishampayan2003curves}, Ramstad \cite{ramstad}, and Hekland et.al. \cite{hekland2009}, where spiral-like curves were explored for transmission of Gaussian sources over AWGN channels for bandwidth compression ($m>k$) and expansion ($m<k$). 
  
There exist two  main approaches to numerical optimization of the mappings. One is based on optimizing the parameter set of a structured mapping \cite{hu2011analog,wernersson2009polynomial,ramstad,hekland2009}. The performance of this approach is limited to the  parametric form (structure) assumed. For instance, Archimedian spiral based  space filling mappings  proposed   for bandwidth compression,  are known to perform well at high CSNRs.   The other is based on power constrained channel optimized vector quantization (PCCOVQ) where a ``discretized version" of the problem is tackled using tools developed for vector quantization \cite{fuldseth,floor2007power,karlsson2010optimized}.

  It is also noteworthy that a similar problem was solved in \cite{olc,basar1980performance} albeit under the stringent constraint that both encoder and decoder be linear. A related problem, formulated in the pure context of digital systems, was also studied by Fine \cite{fine1964properties}. Properties of the optimal mappings have been considered, over the years, in \cite{shannon1949cpn,ziv, trott}.  Shannon's arguments\cite{shannon1949cpn} are based on the topological impossibility to map between regions in a ``one-to-one", continuous manner, unless they have the same dimensionality. On this basis, he explained the threshold effect common to various communication systems. Moreover, Ziv \cite{ziv} showed that for a Gaussian source transmitted over AWGN channel, no single practical modulation scheme can achieve optimal performance at all noise levels, if the channel rate is greater than the source rate (i.e., bandwidth expansion). It has been conjectured that this result holds whenever the source rate differs from the channel rate \cite{trott}. Our own preliminary results appeared in \cite{emrah_itw10,emrah_dcc10}. 
  
 The existence of  optimal real time encoders have been studied in\cite{witsenhausen1979structure,walrand1983optimal,teneketzis2006structure} for encoding a Markov source with zero-delay. Along these lines, for similar set of problems, \cite{walrand1983optimal} demonstrated the existence the optimal causal encoders  using dynamic programming, its results are recently extended to partially observed Markov sources and multiterminal settings in \cite{yuksel2010optimal}. The problem we consider is intrinsically connected to problems in stochastic control where the controllers must operate at zero delay. A control problem, similar to the zero-delay source channel coding problem here, is the Witsenhausen's well known counterexample \cite{witsenhausen1968counterexample} (see e.g. \cite{basar2008variations} for a comprehensive review) where a similar functional optimization problem is studied and it is shown that nonlinear controllers can outperform linear ones in decentralized control settings even under Gaussianity and MSE assumptions. 

In this paper, we investigate the problem of obtaining vector transformations that optimally map between the $m$-dimensional source space and the $k$-dimensional channel space, under a given transmission power constraint, and where optimality is in the sense of minimum mean square reconstruction error. We provide necessary conditions for optimality of the mappings used at the encoder and the decoder. It is important to note that virtually any source-channel communication system (including digital communication) is a special case of the general mappings shown in Figure 1. A typical digital system, including quantization, error correction and modulation, boils down to a specific mapping from the source space $\mathbb  R^m$ to the channel space $\mathbb  R^k$ and back to reconstruction space $\mathbb  R^m$ at the receiver. Hence the derived optimality conditions are generally valid and subsume digital communications as an extreme special case. Based on the optimality conditions we derive, we propose an iterative algorithm to optimize the mappings for any given $m,k$ (i.e., for both bandwidth expansion or compression) and for any given source-channel statistics. To our knowledge, this problem has not been fully solved, except when both source and channel are scalar and Gaussian. We provide examples of such $m:k$ mappings for source-channel pairs and construct the corresponding source-channel coding systems that outperform the mappings obtained in \cite{fuldseth, chung, vaishampayan2003curves,ramstad, hekland2009}. 

We also study the functional properties of the zero delay source-channel coding problem. Specifically, first we show that the end-to-end mean square error is a concave functional of the source density given  fixed noise density, and of the noise density given a fixed source. Secondly, for the scalar version of the problem, the minimum mean square error is a convex  functional of the channel input density. The convexity result makes the optimal encoding mapping ``essentially unique" \footnote{The optimal mapping is not strictly unique, in the sense that multiple trivially ``equivalent" mappings can be used to obtain the same channel input density. For example, a scalar unit variance Gaussian source and scalar Gaussian channel with power constraint $P$, can be optimally encoded by either  $y=\sqrt P x$ or $y=-\sqrt P x$.} and paves the way to the linearity results we present later.

In digital communications, structured (linear, lattice etc.) codes (mappings) are generally optimal, i.e., the structure comes without any loss in performance. We observe that this is not the case at zero-delay. Hence, characterizing the conditions, in terms of the source, channel densities and the power constraint, that yield linearity of optimal mappings is an interesting open problem. Building on the functional properties we derive in this paper, and the recent results on conditions for linearity of optimal estimation \cite{emrah_estimation}, we derive the necessary and sufficient conditions for linearity of optimal mappings. We then study the CSNR asymptotics and particularly show that given a Gaussian source, optimal mappings are asymptotically linear at high CSNR, irrespective of the channel. Similarly, for a Gaussian channel, optimal mappings are asymptotically linear at low CSNR regardless of the source.


The last part of the paper extends the numerical approach to the scenario of source-channel coding with decoder side information (i.e., the decoder has access to side information that is correlated with the source). This setting, in the context of pure source coding, goes back to the pioneering work of Slepian and Wolf \cite{slepianwolf} and Wyner and Ziv \cite{wynerziv}. The derivation of the optimality conditions for  the decoder side information setting is a direct extension of the point-to-point case, but the distributed nature of this setting results in highly nontrivial mappings. Straightforward numerical optimization of such mappings is susceptible to get trapped in numerous poor local minima that riddle the cost functional. Note, in particular, that in the case of Gaussian sources and channels, linear encoders and decoder (automatically) satisfy the necessary conditions of optimality while, as we will see, careful optimization obtains considerably better mappings that are far from linear.

In Section II, we formulate the problem.  In Section III,  we study the functional properties of the problem and derive the necessary conditions for optimality of the mappings. We then provide an iterative algorithm based on these necessary conditions, in Section IV. In Section V, we derive the necessary and sufficient conditions for linearity of encoding and/or decoding mappings, in the particular instance of scalar mappings. We provide example mappings and comparative numerical results in Section VI. Discussion and future work are presented in Section VII.

\section{Problem Formulation}

\subsection {Preliminaries and Problem Definitions}

Let $\mathbb R$, $\mathbb N$, $\mathbb R^+$, and $\mathbb C$  denote the respective sets of real numbers, natural numbers, positive real numbers and complex numbers. In
general, lowercase letters (e.g., $x$) denote scalars, boldface lowercase (e.g., $\boldsymbol x$) vectors, upper-
case (e.g., $U, X$) matrices and random variables, and boldface uppercase (e.g., $\boldsymbol X$) random
vectors.  $\mathbb E(\cdot)$ and  $\mathbb P(\cdot)$ denote the expectation and probability operators, respectively. $||\cdot||$ denotes the $l_2$ norm. $\nabla$ denotes the gradient and $\nabla_x$ denotes the partial gradient with respect to $\boldsymbol x$. $f^{'} (\cdot)$ denotes the first order derivative of the function $f(\cdot)$, i.e., $f^{'} (x)= \frac{d f(x)}{d x}$. All the logarithms in the paper are natural logarithms and may in general be complex. The integrals are in general Lebesgue integrals. 

Let ${\cal {S}}_{m}^{k}$ denote the set of Borel measurable, square integrable  functions $\{ {\boldsymbol f}: \mathbb R^m \rightarrow \mathbb R^k$\}. Let us define the set $\cal S^+$ as the set of monotonically increasing deterministic, Borel measurable $\mathbb R \rightarrow \mathbb R$ mappings. 

\subsubsection{Point to point}
We consider the general communication system whose block diagram is shown in Figure 1. An $m$-dimensional zero mean\footnote{The zero mean assumption is not necessary, but it considerably simpliÞes the notation. Therefore, it is made throughout the paper.} vector source ${\boldsymbol X} \in\mathbb R^m $ is mapped into a $k$-dimensional vector ${\boldsymbol Y}\in  \mathbb R^k$ by function ${\boldsymbol g}\in {{\cal {S}}_{m}^{k}}$, and transmitted over an additive noise channel. The received vector $\boldsymbol {\hat Y}=\boldsymbol Y+\boldsymbol Z $ is  mapped by the decoder to the estimate $\boldsymbol {\hat X}$ via function ${\boldsymbol h} \in {{\cal {S}}_{k}^{m}}$. The zero mean  noise $\boldsymbol Z$ is assumed to be independent of the source $\boldsymbol X$. The $m$-fold source density is denoted $f_X(\cdot)$ and the $k$-fold noise density is $f_Z(\cdot)$ with characteristic functions $F_X(\boldsymbol \omega)$ and $F_Z(\boldsymbol \omega)$, respectively.

The objective is to minimize, over the choice of encoder ${\boldsymbol g} \in {{\cal {S}}_{m}^{k}}$ and decoder ${\boldsymbol h} \in {{\cal {S}}_{k}^{m}}$, the distortion
\begin{equation}
D [{\boldsymbol g,\boldsymbol h }] ={\mathbb E}\{||{\boldsymbol X}-{\boldsymbol{ \hat X}}||^2\},
\end{equation}
subject to the average power constraint,
\begin{equation}
P [{\boldsymbol g}] = \mathbb E \{ ||{\boldsymbol g}({\boldsymbol X})||^2\}  \leq {P_T} \label{power_cons},
\end{equation}
\noindent where $P_T$ is the specified transmission power level. Bandwidth compression-expansion is determined by the setting of the source and channel dimensions, $k/m$. The power constraint limits the choice of encoder function $\boldsymbol g(\cdot)$. Note that without a power constraint on $\boldsymbol g(\cdot)$, the CSNR is unbounded and the channel can be made effectively noise free.

Our goal is to minimize MSE subject to the average power constraint. Let us write MSE explicitly as a functional of $\boldsymbol g(\cdot)$ and $\boldsymbol h(\cdot)$ 
\begin{align}
 D(\boldsymbol g, \boldsymbol h) =  \int   \int{ [\boldsymbol x- \boldsymbol h(\boldsymbol g(\boldsymbol x)+\boldsymbol z )]^T \boldsymbol [\boldsymbol x -\boldsymbol h(\boldsymbol g(\boldsymbol x)+\boldsymbol z)] }  f_X({\boldsymbol x}) f_Z({\boldsymbol z})  \mathrm{d}\boldsymbol x  \mathrm{d}\boldsymbol z 
 \end{align}
 To impose the power constraint, we construct the Lagrangian cost functional: 
\begin{equation}
J (\boldsymbol g, \boldsymbol h) =D(\boldsymbol g, \boldsymbol h)+\lambda \{P (\boldsymbol g)\}
\label{total_cost}
 \end{equation}
to minimize over the mapping $\boldsymbol g (\cdot)$ and $\boldsymbol h (\cdot)$. 

\subsubsection{Decoder side information}
As shown in Figure 2, there are two correlated vector sources ${\boldsymbol X_1} \in  \mathbb R^{m_1} $ and  ${\boldsymbol X_2} \in \mathbb R^{m_2} $ with a joint density $f_{X_1,X_2}(\boldsymbol x_1,\boldsymbol x_2)$.  $\boldsymbol X_2$ is available only to the decoder, while $\boldsymbol X_1$ is mapped to ${\boldsymbol Y} \in  \mathbb R^k$ by the encoding function ${\boldsymbol g} \in{{\cal {S}}_{m_1}^{k}}$ and transmitted over the channel whose additive noise ${\boldsymbol Z} \in \mathbb R^k$, with density $f_Z(\cdot)$, is independent of $\boldsymbol X_1, \boldsymbol X_2$. The received channel output $\boldsymbol{ \hat Y}=\boldsymbol Y+\boldsymbol Z $ is mapped to the estimate $\boldsymbol{ \hat X_1}$ by the decoding function ${\boldsymbol h} :  \mathbb R^k \times \mathbb R^{m_2}  \rightarrow  \mathbb R^m_1$. The problem is to find optimal mapping functions $\boldsymbol g, \boldsymbol h$ that minimize the distortion 

\begin{equation} 
D(\boldsymbol g, \boldsymbol h)={\mathbb E}\{||\boldsymbol X_1-\boldsymbol {\hat X_1}||^2\}
 \end{equation} 
   subject to average power constraint identical to (\ref{power_cons}).

\begin{figure*}
	\centering
			\includegraphics[scale=0.30]{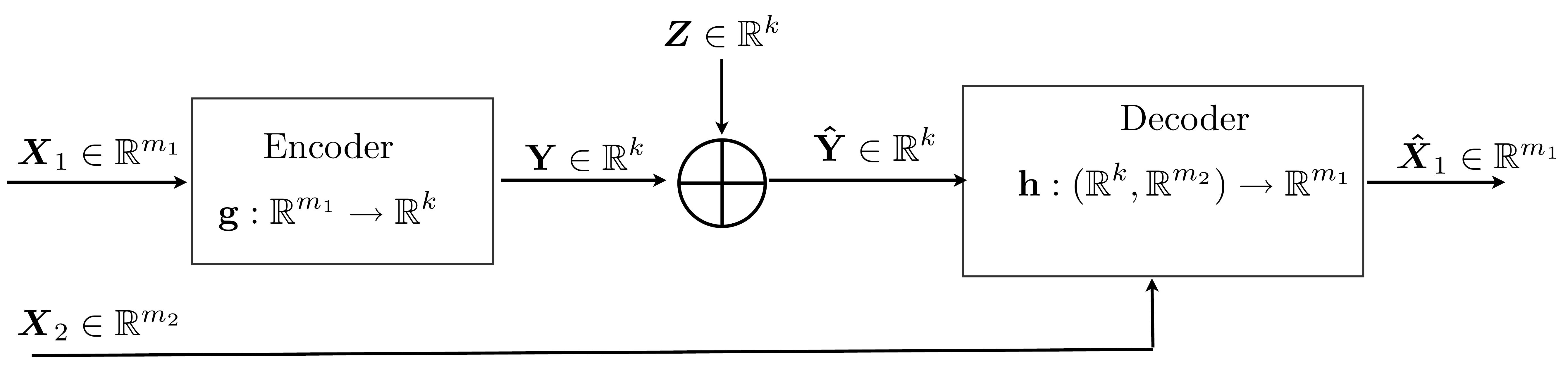}
	\caption{Source-channel coding with decoder side information}
\end{figure*}

\subsection {Asymptotic Bounds for Gaussian Source and Channel}
Although the problem we consider is delay limited, it is insightful to consider asymptotic bounds obtained at infinite delay. From Shannon's source and channel coding theorems, it is known that, asymptotically, the source can be compressed to $R(D)$ bits (per source sample) at distortion level $D$, and that $C$ bits can be transmitted over the channel (per channel use) with arbitrarily low probability of error, where $R(D)$ is the source rate-distortion function, and $C$ is the channel capacity, (see e.g.\cite{coverbook}). The asymptotically optimal coding scheme is the tandem combination of the optimal source and channel coding schemes,  hence $m R(D) \leq k C$ must hold. By setting 
\begin{equation}
\label{eq1}
R(D) =\frac{k}{m}C,
\end{equation}
one obtains a lower bound on the distortion of any source-channel coding scheme. Next, we 
specialize to Gaussian sources and channels, which we will mostly use in the numerical results
section, while emphasizing that the proposed method is generally applicable to any source and noise densities. The rate-distortion function for the memoryless Gaussian source of variance $\sigma_X^2$, under the squared-error distortion measure is given by
\begin{equation}
\label{eq2}
R(D)=\max(0,\frac{1}{2}\log\frac {\sigma_X^2}{D}),
\end{equation}
for any distortion value $D \geq 0$. The capacity of the AWGN channel is given by
\begin{equation}
\label{eq3}
C=\frac{1}{2}\log(1+\frac{P_T}{\sigma_Z^2}),
\end{equation}
where $P_T$ is the transmission power constraint and $\sigma_Z^2$ is the noise variance. Plugging (\ref{eq2}) and (\ref{eq3}) in (\ref{eq1}), we obtain the optimal performance theoretically attainable (OPTA):
\begin{equation}
\label{opta}
D_{OPTA}=\frac{\sigma_X^2}{(1+\frac{P_T}{\sigma_Z^2})^{\frac{k}{m}}}.
\end{equation}

For source coding with decoder side information, it has been established for Gaussians and MSE distortion that there is no rate loss due to the fact that the side information is unavailable to the encoder \cite{wynerziv}. Similar to the derivation above, OPTA can be obtained for source-channel coding with decoder side information, by equating the conditional rate distortion function of the source (given the side information) to the channel capacity. The rate distortion function of $X_1$ when $X_2$ serves as side information and $[X_1,X_2] \sim \mathcal N(\boldsymbol 0, R_X)$ where $R_X=\sigma_X^2 \left[ \begin{array}{cc}  1 & \rho \\ \rho &1 \end{array} \right ]$ with $|\rho|\leq 1$ is:
\begin{equation}
\label{eq22}
R(D)=\max(0,\frac{1}{2}\log\frac {(1-\rho^2)\sigma_X^2}{D}),
\end{equation}
Similar to point-to-point setting, we plug (\ref{eq22}) and (\ref{eq3}) in (\ref{eq1}) to obtain OPTA
\begin{equation}
\label{opta2}
D_{OPTA}=\frac{(1-\rho^2)\sigma_X^2}{(1+\frac{P_T}{\sigma_Z^2})^{\frac{k}{m}}}.
\end{equation}

Note that OPTA is derived without any delay constraints and may not be achievable by a delay-constrained coding scheme. No achievable theoretical bound is known for joint source channel coding with limited delay, although there are recent results that tighten the outer bound, see e.g. \cite{dlb, tridenski2011bounds, reani2012data}.

\section{Functional Properties of Zero-Delay Source-Channel Coding Problem}
In this section, we study the functional properties of the optimal zero-delay source-channel coding problem. These properties are not only important in their own right, but also, as we will show in the following sections, enable the derivation of several subsequent results.  

Let us restate the Lagrangian cost (\ref{total_cost}), as $J(\boldsymbol X, \boldsymbol Z, \boldsymbol g,\boldsymbol h)$ which makes explicit its dependence on the source and channel noise $\boldsymbol X \sim f_X(\cdot)$ and $\boldsymbol Z\sim f_Z(\cdot)$,  beside the deterministic mappings $\boldsymbol g(\cdot)$ and $\boldsymbol h(\cdot)$ as: 
\begin{equation}
J(\boldsymbol X, \boldsymbol Z, \boldsymbol g, \boldsymbol h)=\mathbb E \left \{ ||\boldsymbol X-\boldsymbol h(\boldsymbol g(\boldsymbol X)+\boldsymbol Z) ||^2 \right \} +\lambda \mathbb E \left \{ ||\boldsymbol g(\boldsymbol X)||^2 \right \}
\end{equation}
 The minimum achievable cost is
\begin{equation}
J_m(\boldsymbol X,\boldsymbol Z)\triangleq\inf_{\substack{\boldsymbol g,\boldsymbol h}}  J(\boldsymbol X,\boldsymbol Z, \boldsymbol g, \boldsymbol h)
\end{equation}
Similarly, conditioned on another random variable $\boldsymbol U$, $J_m( \boldsymbol X, \boldsymbol  Z|\boldsymbol  U)$ denotes the overall cost when $\boldsymbol U$ is available to both encoder and decoder. We define $J_r$ as the value of overall cost as a function of $\boldsymbol g(\cdot), $ when $ \boldsymbol h(\cdot)$ had already been optimized for $ \boldsymbol g(\cdot)$: 
\begin{equation}
J_r(\boldsymbol X,\boldsymbol Z,\boldsymbol g) \triangleq\inf \limits_{\boldsymbol h} J(\boldsymbol X, \boldsymbol Z, \boldsymbol g, \boldsymbol h)
\end{equation}

\subsection{Concavity of $J_m$ in $f_X(\cdot)$ and $f_Z(\cdot)$}
In this section, we show the concavity of the minimum cost, $J_m$ in the source density $f_X(\cdot)$ and in the channel noise density $f_Z(\cdot)$. Similar results were derived for the MMSE estimation in the scalar setting, in \cite{wu2012functional}, where no encoder is  present in the problem formulation. We start with the following lemma which states the impact of conditioning on the overall cost. 

\begin{lemma}
Conditioning cannot increase the overall cost, $J_m$ i.e., $J_m(\boldsymbol X, \boldsymbol Z)\geq J_m(\boldsymbol X, \boldsymbol Z|\boldsymbol U)$ for any $\boldsymbol U$. 
\label{cond1}
\end{lemma}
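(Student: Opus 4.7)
The plan is to run the standard ``enlarging the feasible set cannot raise the infimum'' argument. When $\boldsymbol U$ is available to both terminals, the admissible mappings in the conditional problem are measurable functions $\boldsymbol g: \mathbb R^m \times \mathbb R^d \to \mathbb R^k$ and $\boldsymbol h: \mathbb R^k \times \mathbb R^d \to \mathbb R^m$ (where $\boldsymbol U$ takes values in $\mathbb R^d$), yielding the Lagrangian
\begin{equation*}
J(\boldsymbol X,\boldsymbol Z,\boldsymbol g,\boldsymbol h \mid \boldsymbol U) = \mathbb E\bigl\{\|\boldsymbol X-\boldsymbol h(\boldsymbol g(\boldsymbol X,\boldsymbol U)+\boldsymbol Z,\boldsymbol U)\|^2\bigr\} + \lambda\, \mathbb E\bigl\{\|\boldsymbol g(\boldsymbol X,\boldsymbol U)\|^2\bigr\},
\end{equation*}
with $J_m(\boldsymbol X,\boldsymbol Z \mid \boldsymbol U)$ its infimum over such pairs. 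The point is that any pair $(\boldsymbol g_0, \boldsymbol h_0)$ that is feasible for the unconditional problem can be lifted to the ``constant-in-$\boldsymbol U$'' pair $\tilde{\boldsymbol g}(\boldsymbol x,\boldsymbol u) \triangleq \boldsymbol g_0(\boldsymbol x)$ and $\tilde{\boldsymbol h}(\boldsymbol y,\boldsymbol u) \triangleq \boldsymbol h_0(\boldsymbol y)$. This lift is jointly measurable; since its input depends only on $\boldsymbol X$, its power $\mathbb E\{\|\tilde{\boldsymbol g}(\boldsymbol X,\boldsymbol U)\|^2\} = \mathbb E\{\|\boldsymbol g_0(\boldsymbol X)\|^2\}$ still meets the constraint $P_T$; and the expected distortion is unchanged because $\boldsymbol U$ enters neither $\tilde{\boldsymbol g}$ nor $\tilde{\boldsymbol h}$, so the joint law of $(\boldsymbol X,\boldsymbol Z,\tilde{\boldsymbol g}(\boldsymbol X,\boldsymbol U),\tilde{\boldsymbol h}(\tilde{\boldsymbol g}(\boldsymbol X,\boldsymbol U)+\boldsymbol Z,\boldsymbol U))$ agrees with the corresponding law in the unconditional setting after integrating out $\boldsymbol U$.

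Taking the infimum on both sides over $(\boldsymbol g_0,\boldsymbol h_0)$, and observing that the lifted pairs form a subset of the $\boldsymbol U$-measurable pairs, gives
\begin{equation*}
J_m(\boldsymbol X,\boldsymbol Z\mid \boldsymbol U) \;=\; \inf_{\boldsymbol g,\boldsymbol h} J(\boldsymbol X,\boldsymbol Z,\boldsymbol g,\boldsymbol h\mid \boldsymbol U) \;\leq\; \inf_{\boldsymbol g_0,\boldsymbol h_0} J(\boldsymbol X,\boldsymbol Z,\tilde{\boldsymbol g},\tilde{\boldsymbol h}\mid \boldsymbol U) \;=\; J_m(\boldsymbol X,\boldsymbol Z),
\end{equation*}
which is the desired inequality. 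There is no real obstacle here; the only care needed is bookkeeping to verify that the lift preserves both measurability and the average power constraint, and that the objective functional evaluated on the lifted pair coincides with its unconditional value. If one wanted a pointwise (per-realization of $\boldsymbol U$) power constraint instead of an average one, the same constant-in-$\boldsymbol U$ lift still satisfies it, so the argument is unaffected.
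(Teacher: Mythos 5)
Your proof is correct and follows essentially the same route as the paper's: the conditional problem's feasible set contains all mappings that ignore $\boldsymbol U$, so its infimum cannot exceed the unconditional one. Your version merely spells out the constant-in-$\boldsymbol U$ lift and takes the infimum over all unconditional pairs rather than invoking an optimal pair, which is slightly more careful but not a different argument.
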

\begin{proof}
The knowledge of $\boldsymbol U$ cannot increase the total cost, since we can always ignore $U$ and use the $\boldsymbol g(\cdot),\boldsymbol h(\cdot)$ pair that is optimal for $J_m(\boldsymbol X,\boldsymbol Z)$. Hence,  $J_m(\boldsymbol X, \boldsymbol Z|\boldsymbol U)\leq J_m(\boldsymbol X, \boldsymbol Z)$.  
\end{proof}
Using Lemma \ref{cond1}, we prove the following theorem which states the concavity of the minimum cost $J_m(\boldsymbol X, \boldsymbol Z)$.  
\begin{theorem}
$J_m$ is concave in $f_X(\cdot)$ and $f_Z(\cdot)$.
\end{theorem}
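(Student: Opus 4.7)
The plan is to derive concavity from the lemma just proved by a standard mixture/auxiliary-randomization argument.

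To show concavity in $f_X$, I would fix $f_Z$ and take any two source densities $f_{X_1}, f_{X_2}$ and any $\alpha \in [0,1]$. Define a mixture source $\boldsymbol X$ whose density is $\alpha f_{X_1}+(1-\alpha)f_{X_2}$ by introducing an auxiliary binary random variable $\boldsymbol U$ with $\mathbb P(\boldsymbol U=1)=\alpha$, $\mathbb P(\boldsymbol U=2)=1-\alpha$, and declaring $\boldsymbol X\mid\{\boldsymbol U=i\}\sim f_{X_i}$. The key observation is that when $\boldsymbol U$ is revealed to both encoder and decoder, the joint optimization over mappings $\boldsymbol g(\cdot,\boldsymbol U), \boldsymbol h(\cdot,\boldsymbol U)$ decouples across the values of $\boldsymbol U$, because both the MSE term and the power term are expectations that split as
\begin{equation}
\mathbb E\{\cdot\}=\alpha\,\mathbb E\{\cdot\mid \boldsymbol U=1\}+(1-\alpha)\,\mathbb E\{\cdot\mid\boldsymbol U=2\}.
\end{equation}
Hence $J_m(\boldsymbol X,\boldsymbol Z\mid\boldsymbol U)=\alpha J_m(\boldsymbol X_1,\boldsymbol Z)+(1-\alpha) J_m(\boldsymbol X_2,\boldsymbol Z)$, and applying Lemma~\ref{cond1} yields
\begin{equation}
J_m(\boldsymbol X,\boldsymbol Z)\;\geq\;\alpha\, J_m(\boldsymbol X_1,\boldsymbol Z)+(1-\alpha)\, J_m(\boldsymbol X_2,\boldsymbol Z),
\end{equation}
which is precisely concavity in $f_X$.

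The argument for concavity in $f_Z$ is completely symmetric. Fix $f_X$ and two noise densities $f_{Z_1},f_{Z_2}$, mix them with weights $\alpha, 1-\alpha$ via an auxiliary $\boldsymbol U$ independent of $\boldsymbol X$ and such that $\boldsymbol Z\mid\{\boldsymbol U=i\}\sim f_{Z_i}$. When $\boldsymbol U$ is available at both terminals, the encoder's mapping $\boldsymbol g$ cannot depend on $\boldsymbol Z$, but the decoder can use $\boldsymbol h(\cdot,\boldsymbol U)$, and the optimization over $(\boldsymbol g,\boldsymbol h(\cdot,\boldsymbol U))$ again decouples by conditioning on $\boldsymbol U$. (Even the power constraint is unaffected because $\mathbb E\{\|\boldsymbol g(\boldsymbol X)\|^2\}$ is independent of $\boldsymbol U$.) This gives $J_m(\boldsymbol X,\boldsymbol Z\mid\boldsymbol U)=\alpha J_m(\boldsymbol X,\boldsymbol Z_1)+(1-\alpha)J_m(\boldsymbol X,\boldsymbol Z_2)$, and Lemma~\ref{cond1} finishes the argument.

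The only subtle point — and hence the main obstacle worth verifying carefully — is justifying the decoupling step: that
\begin{equation}
\inf_{\boldsymbol g(\cdot,\cdot),\,\boldsymbol h(\cdot,\cdot)}\sum_{u}\mathbb P(\boldsymbol U\!=\!u)\big\{\mathbb E[\|\boldsymbol X-\boldsymbol h(\boldsymbol g(\boldsymbol X,u)+\boldsymbol Z,u)\|^2\mid \boldsymbol U\!=\!u]+\lambda\,\mathbb E[\|\boldsymbol g(\boldsymbol X,u)\|^2\mid \boldsymbol U\!=\!u]\big\}
\end{equation}
equals the corresponding weighted sum of unconditional infima. This holds because each $u$-slice is an independent optimization problem over the pair $(\boldsymbol g(\cdot,u),\boldsymbol h(\cdot,u))\in\mathcal S_m^k\times\mathcal S_k^m$, with no coupling constraint across $u$; thus the infimum of the sum is the sum of infima. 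Once this is stated, the remainder of the proof is just the application of Lemma~\ref{cond1} twice.
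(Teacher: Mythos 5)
Your proof is correct and follows essentially the same route as the paper: a binary time-sharing variable $\boldsymbol U$ realizing the mixture, the decoupling of the conditional cost into the weighted sum of per-component optimal costs, and an application of Lemma~\ref{cond1}. The only difference is that you spell out the decoupling step that the paper asserts as an unexplained equality, which is a welcome clarification but not a different argument.
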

\begin{proof}
Let  $\boldsymbol X$ be distributed according to $f_X=pf_{X_1}+(1-p)f_{X_2}$, where $f_{X_1}$ and $f_{X_2}$ respectively denote the densities of random variables $\boldsymbol X_1$ and $\boldsymbol X_2$.  Then, $\boldsymbol X$ can be expressed, in terms of a time sharing random variable $U$ which takes values in the alphabet $\{1,2\}$, with $\mathbb P\{U=1\}=p$: $\boldsymbol X=\boldsymbol X_U$.  Then, we have
\begin{align}
J_m(\boldsymbol X,\boldsymbol Z) \geq & J_m(\boldsymbol X, \boldsymbol Z|U)\\
 =& pJ_m(\boldsymbol X_1, \boldsymbol Z)+(1-p)J_m(\boldsymbol X_2, \boldsymbol Z)
\end{align} 
 which proves the concavity of $J_m(\boldsymbol X, \boldsymbol Z)$ for fixed $f_Z$. Similar arguments on $\boldsymbol Z$ prove that $J_m(\boldsymbol X, \boldsymbol Z)$ is also concave in $f_Z$ for fixed $f_X$.
\end{proof}


\subsection{Convexity of $J_r$ in $f_Y(\cdot)$ }
In this section, we assume that the source and the channel are scalar, i.e., $m\!=\!k\!=\!1$, for simplicity, although our results can  be extended to higher matched dimensions, $m=k, \forall m,k \in \mathbb N$. We show the convexity of $J(X, Z, g, h)$ in the channel input density $f_Y(\cdot)$ of $Y=g(X)$, when $h(\cdot)$ is optimized for $g(\cdot)$. An important distinction to make  is that convexity in  $g(\cdot)$ is not implied. A trivial example to demonstrate  non-convexity in $g(\cdot)$ is the scalar Gaussian source and channel setting, where both $Y=\sqrt {\frac{P_T}{\sigma_X^2}}  X$ and  $Y=-\sqrt {\frac{P_T}{\sigma_X^2}}  X$ are optimal (when used in conjunction with their respective optimal decoders). This example also leads to the intuition that the cost functional may be ``essentially" convex (i.e., convex  up to the sign of $g(\cdot)$) although it is clearly not convex in the strict sense. It turns out that this intuition is correct:  $ J_r(X, Z, g)$ is convex in  $f_Y(\cdot)$.  

Towards showing convexity, we first introduce the idea of probabilistic (random) mappings, similar, in spirit, to the random encoders used in the coding theorems\cite{shannon1949mathematical, csiszar2011information}. We reformulate the mapping problem by allowing random mappings, i.e., we relax the mapping from a deterministic function $Y=g(X)$ to a probabilistic transformation, expressed as $f_{Y|X}(x,y)$. Note that similar relaxation to stochastic settings have been used in the literature, e.g. recently in \cite{wu2011witsenhausen}. We define this ``generalized" mapping problem as: minimize $J_{gen}(X, Y, Z)$  over the conditional density ${f_{Y|X}}$ where the cost functional $J_{gen}$ is defined as 
\begin{equation}
J_{gen}(X, Y, Z)\triangleq \inf \limits_{h}\mathbb E\{  (X-h(Y+Z))^2\} +\lambda \mathbb E\{ Y^2\}.
\label{gcost}
\end{equation}
We first need to show that this relaxation does not change the solution space. This is done via the following lemma. 

\begin{lemma}
$Y \sim f_Y(\cdot)$ which minimizes (\ref{gcost}) is a deterministic function of the input, $Y=g(X)$, i.e., $J_m(X,Z)=\inf \limits_{g} J_r(X,Z,g)=\inf \limits_{f_{Y|X}}J_{gen}(X, Y, Z)$.
\label{useful3}
\end{lemma}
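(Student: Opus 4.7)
My plan is to prove the two-sided inequality $J_m(X,Z)=\inf_{f_{Y|X}}J_{gen}(X,Y,Z)$. One direction is essentially free: every deterministic encoder $g$ corresponds to a degenerate conditional law $f_{Y|X}(\cdot\mid x)=\delta_{g(x)}(\cdot)$, so the set of probabilistic mappings contains the deterministic ones and consequently $\inf_{f_{Y|X}}J_{gen}\le\inf_g J_r(X,Z,g)=J_m(X,Z)$.

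The substantive direction is $\inf_{f_{Y|X}}J_{gen}\ge J_m(X,Z)$. I would start by invoking the standard functional representation: any conditional density $f_{Y|X}$ can be realized as $Y=g(X,W)$ for a measurable $g$ and an auxiliary random variable $W$ independent of $X$ (and of $Z$). Fixing the optimal decoder $h^\star$ for the stochastic mapping under consideration, I would then condition on $W$ and write
\begin{equation}
J_{gen}(X,Y,Z)=\mathbb E_W\!\left[\mathbb E\!\left\{(X-h^\star(g(X,W)+Z))^2+\lambda\,g(X,W)^2\,\Big|\,W\right\}\right].
\end{equation}
For each realization $W=w$, the map $x\mapsto g(x,w)$ is a \emph{deterministic} encoder, and $h^\star$ is just some (in general suboptimal) decoder for it, so the inner conditional expectation is at least $J_r(X,Z,g(\cdot,w))$, which in turn is at least $J_m(X,Z)$. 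Taking expectation over $W$ gives $J_{gen}(X,Y,Z)\ge J_m(X,Z)$, and taking the infimum over $f_{Y|X}$ finishes the chain.

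The middle identity $J_m(X,Z)=\inf_g J_r(X,Z,g)$ is simply the definition of $J_m$ rewritten using that $\inf_{g,h}=\inf_g\inf_h$, so no additional argument is needed there. The main technical point I expect to have to be careful about is the measurable selection underlying $Y=g(X,W)$: one must justify that an arbitrary Markov kernel from $\mathbb R$ to $\mathbb R$ admits such a representation with $W$ independent of $X$, which is a standard fact (e.g., via the inverse-CDF coupling with $W$ uniform on $[0,1]$) but should be cited explicitly. A secondary, minor worry is confirming that $h^\star$ exists as a genuine minimizer — this is automatic since for fixed $f_{Y|X}$ the optimal $h^\star$ is the conditional expectation $\mathbb E[X\mid Y+Z=\cdot]$, which is well-defined and square-integrable by the power constraint and the $L^2$ assumption on $X$. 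With these pieces in place, the argument is short and essentially a conditioning/"averaging can only hurt" observation applied to the encoder's internal randomness.
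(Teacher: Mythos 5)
Your proof is correct, but it takes a genuinely different route from the paper's. The paper fixes the decoder $h$, observes that the Lagrangian cost is affine in the conditional law $f_{Y|X}(\cdot\mid x)$ for each $x$ (it is an integral of an auxiliary kernel $G$ against $f_{Y|X}$), and concludes that the infimum over this simplex of probability measures is attained by a point mass at the minimizing $y$ --- i.e., by a deterministic encoder. You instead realize the Markov kernel as $Y=g(X,W)$ with $W$ independent of $(X,Z)$ and condition on $W$, turning the stochastic encoder into a randomization over deterministic encoders, each of which costs at least $J_m(X,Z)$. These are two faces of the same convexity fact (a stochastic encoder is a mixture of deterministic ones and the cost is affine in the mixture), but your version handles more cleanly a point where the paper's writeup is loose: the paper's displayed chain places $\inf_y G(x,y,z)$ inside the integral over $z$, which as written would let the encoder output depend on the noise realization; the correct argument must first average $G$ over $z$ and only then minimize over $y$ for each $x$. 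Your conditioning on $W$, which is independent of $Z$, keeps the encoder automatically non-anticipative. The price you pay is the functional-representation/measurable-selection step (inverse-CDF coupling), which the paper's extreme-point argument avoids entirely. One shared caveat: both arguments really establish the equality of infima, $J_m(X,Z)=\inf_{f_{Y|X}}J_{gen}$, rather than that every minimizer is deterministic --- if several deterministic encoders are optimal (e.g., $g$ and $-g$ in the Gaussian case), a randomization among them also attains the infimum without $Y$ being a function of $X$; this is a limitation of the lemma's phrasing rather than of either proof, and the identity of infima is all that is used downstream.
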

\begin{proof}
Let us first define an auxiliary function 
\begin{equation}
G(X,Y,Z)\triangleq   (X-h(Y+Z))^2+Y^2.
\end{equation}
Next, we observe that 
\begin{IEEEeqnarray}{rCl}
 \inf_{h}\inf \limits_{f_{Y|X}}J_{gen}(X, Y, Z)
 &= \inf_{h}& \int \int\! f_X(x) \inf \limits_{f_{Y|X}} \left \{  \int G(X,Y,Z) f_{Y|X}(x,y) dy  \!  \right\} \!dx f_Z(z) dz  \\
& = \inf_{h} &\int \int\! f_X(x) \inf_{y} G(x,y,z) dxf_Z(z) dz, 
 \label{ken}
\end{IEEEeqnarray}
where (\ref{ken}) is due to the fact that the minimizing $f_{Y|X}$ simply allocates all probability to the value $y$ which minimize $G(x,y,z)$. Hence, for any fixed $h(\cdot)$, the minimizing $f_{Y|X}$ is deterministic.  Using the optimal $h(\cdot)$ as the fixed $h(\cdot)$ in (\ref{ken}), we show that  the optimal $Y \sim f_Y(\cdot)$  is a deterministic function: $Y=g(X)$. 

\end{proof}

Next, we proceed to show that the generalized mapping problem is convex in $f_Y(\cdot)$. To this aim, we show that $J_{gen}$ can be written in terms of a known metric in probability theory, Wasserstein metric \cite{villani2009optimal} and use its functional properties. First, we present the definition and some important properties of this metric. 

Wasserstein metric is a metric defined on the quadratic Wasserstein space\footnote{The quadratic Wasserstein space on $\mathbb R$ is defined as the collection of all Borel probability measures with finite second moments, denoted by ${\cal P}_2(\mathbb R)$.} ${\cal P}_2(\mathbb R)$, defined for $S, Q \in {\cal P}_2 (\mathbb R)$ as 
\begin{equation}
W_2(S, Q)=\inf \left \{ ||X-Y||_2 : X\sim S, Y\sim Q\right \},
\end{equation}
where $ ||X-Y||_2\triangleq\sqrt{\mathbb E\{(X-Y)^2\}}$ and the infimum is over the joint distribution of $X$ and $Y$.

The $W_2$ metric measures the convergence in distribution and second order moments, i.e., $W_2(S_{X_k}, S_X)$ converges to zero if and only if $X_k$ converges to $X$ in distribution and $\mathbb E \{X_k^2\}$ converges to $\mathbb E \{X^2\}$. The following properties of this  metric will be used to derive the  subsequent results.

\begin{lemma}[\cite{villani2009optimal}]
\label{useful2}
$W_2(S, Q)$ satisfies the following properties: \\
1) The metric $W_2(S, Q)$ is lower semi-continuous in both $S$ and $Q$. \\
2) For a given  $S$, $ W_2^2(S,Q)$  is convex in  $Q$. 
\end{lemma}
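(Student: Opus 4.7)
The plan is to handle the two claims separately, working directly from the definition $W_2(S,Q)=\inf\{\|X-Y\|_2 : X\sim S,\ Y\sim Q\}$, which I will read as an infimum of $\bigl(\int (x-y)^2\,d\pi\bigr)^{1/2}$ over the set $\Pi(S,Q)$ of couplings, i.e., Borel probability measures on $\mathbb R^2$ with marginals $S$ and $Q$.

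For part~(2), I would give a convex-combination-of-couplings argument. Fix $S$, write $Q=\alpha Q_1+(1-\alpha)Q_2$ with $\alpha\in[0,1]$, and pick $\varepsilon$-optimal couplings $\pi_1\in\Pi(S,Q_1)$ and $\pi_2\in\Pi(S,Q_2)$. Then $\pi\triangleq\alpha\pi_1+(1-\alpha)\pi_2$ has first marginal $\alpha S+(1-\alpha)S=S$ and second marginal $\alpha Q_1+(1-\alpha)Q_2=Q$, so $\pi\in\Pi(S,Q)$. Since the quadratic cost is linear in the coupling,
\[
W_2^2(S,Q)\le\int(x-y)^2\,d\pi=\alpha\!\int(x-y)^2 d\pi_1+(1-\alpha)\!\int(x-y)^2 d\pi_2\le \alpha W_2^2(S,Q_1)+(1-\alpha)W_2^2(S,Q_2)+\varepsilon,
\]
and letting $\varepsilon\downarrow 0$ yields the desired convexity inequality.

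For part~(1), I would invoke Prokhorov's theorem. Suppose $S_n\to S$ and $Q_n\to Q$ weakly; pass to a subsequence on which $W_2(S_n,Q_n)$ realizes $\liminf_n W_2(S_n,Q_n)$ and, for each $n$, pick an $\varepsilon_n$-optimal coupling $\pi_n\in\Pi(S_n,Q_n)$ with $\varepsilon_n\downarrow 0$. Weak convergence of the marginals makes each marginal family tight, and it is a standard fact that tightness of marginals implies tightness of the couplings $\{\pi_n\}$ on $\mathbb R^2$. By Prokhorov, a further subsequence $\pi_{n_k}$ converges weakly to some $\pi$, and the continuous coordinate projections identify its marginals as $S$ and $Q$, hence $\pi\in\Pi(S,Q)$. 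A lower-semicontinuity statement applied to the nonnegative continuous cost $(x-y)^2$ then gives $\int(x-y)^2\,d\pi\le\liminf_k\int(x-y)^2\,d\pi_{n_k}$, and taking square roots finishes the argument because the right-hand side equals $\liminf_n W_2^2(S_n,Q_n)$.

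The main obstacle will be the Fatou step in part~(1): the cost $(x-y)^2$ is unbounded, so the Portmanteau theorem for bounded continuous test functions does not apply directly. I would circumvent this by truncation, applying Portmanteau to the bounded continuous function $(x-y)^2\wedge M$ to obtain $\int((x-y)^2\wedge M)\,d\pi\le\liminf_k\int((x-y)^2\wedge M)\,d\pi_{n_k}\le\liminf_k\int(x-y)^2\,d\pi_{n_k}$, and then letting $M\to\infty$ and using monotone convergence on the left. A minor secondary issue is that optimal (as opposed to $\varepsilon$-optimal) couplings need not exist a priori on general Polish spaces, but working throughout with $\varepsilon_n$-optimal couplings and letting $\varepsilon_n\downarrow 0$ after the compactness step avoids the issue cleanly.
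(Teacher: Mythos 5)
Your proposal is correct. The paper does not prove this lemma at all --- it is imported verbatim from the cited reference (Villani, \emph{Optimal Transport: Old and New}) --- and your two arguments are precisely the standard ones found there: convexity via the linearity of the quadratic cost in the coupling together with the fact that $\alpha\pi_1+(1-\alpha)\pi_2\in\Pi(S,\alpha Q_1+(1-\alpha)Q_2)$, and lower semi-continuity via tightness of couplings with tight marginals, Prokhorov, identification of the limit's marginals through the continuous projections, and the truncation-plus-monotone-convergence version of Fatou for the unbounded cost. Both steps are sound as written (on $\mathbb{R}$ optimal couplings do in fact exist, but your $\varepsilon_n$-optimal workaround is harmless and keeps the argument self-contained).
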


 Next, we present our main result in this section. Hereafter, we limit the space of decoding functions to ${\cal S}^+$, i.e., monotone increasing, without any loss of generality (see e.g., \cite{wu2012functional}).

\begin{theorem}
 $J_r$ is convex in $f_Y(\cdot)$ and hence the solution to the mapping problem is unique in $f_Y(\cdot)$.
\label{theo4}
\end{theorem}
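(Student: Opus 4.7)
The plan is to reduce convexity of $J_r[f_Y]$ to the convexity of the quadratic Wasserstein distance $W_2^2(f_X,\cdot)$ from Lemma~\ref{useful2}, combined with linearity of the power term $\lambda\int y^2 f_Y(y)\,dy$ in $f_Y$.

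First I would invoke Lemma~\ref{useful3} to replace the deterministic encoder $g$ by a random one without changing the infimum, and write
\[
J_r[f_Y] \;=\; \lambda\!\int\! y^2 f_Y(y)\,dy \;+\; \inf_{f_{Y|X}}\,\inf_{h\in\mathcal S^+}\,\mathbb E\{(X-h(Y+Z))^2\},
\]
where the outer infimum is over conditionals $f_{Y|X}$ with induced $Y$-marginal equal to $f_Y$. Since the power term is linear (hence convex) in $f_Y$, I can focus on the MSE infimum. For a fixed monotone $h\in\mathcal S^+$, the smoothed decoder $\widetilde h(y)\triangleq\mathbb E_Z\{h(y+Z)\}$ is again monotone, so a Hoeffding rearrangement argument shows that the inner infimum over the coupling of $(X,Y)$ is attained at the comonotonic coupling of $X$ and $Y$. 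Then, optimizing $h$ so that $\hat X=h(Y+Z)$ realizes the monotone rearrangement of the convolved output $f_Y*f_Z$ onto $f_X$ collapses the remaining cost to $W_2^2(f_X,Q(f_Y))$, where $Q$ is an operator built from convolution with $f_Z$ and is therefore affine in $f_Y$.

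Lemma~\ref{useful2} then gives that $W_2^2(f_X,\cdot)$ is convex in its second argument; composition with the affine map $f_Y\mapsto Q(f_Y)$ preserves convexity, so the MSE infimum and therefore $J_r$ is convex in $f_Y$. Uniqueness of the minimizing $f_Y$ follows from strict convexity of $W_2^2$ along any non-trivial chord in its second argument; the sign ambiguity flagged in the footnote acts at the level of $g$, not of $f_Y$, and so does not obstruct uniqueness in $f_Y$.

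The main obstacle is the Wasserstein identification of the MSE infimum. The naive bound $\mathbb E\{(X-\hat X)^2\}\geq W_2^2(f_X,f_{\hat X})$ is in general strict, because the coupling of $X$ with $Y+Z$ cannot be chosen freely under the constraint that $Z$ be independent of $X$; making the bound tight in the relevant sense is what forces me to use the enlarged feasibility of the random-encoder formulation (Lemma~\ref{useful3}) together with monotonicity of $h$. Lower semicontinuity of $W_2$ from Lemma~\ref{useful2} would be used to pass to the infimum cleanly.
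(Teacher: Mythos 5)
Your overall skeleton matches the paper's: invoke Lemma~\ref{useful3} to pass to randomized encoders, split $J_r$ into the power term (linear in $f_Y$) plus an estimation term, identify the estimation term with a squared Wasserstein distance so that Lemma~\ref{useful2} gives convexity, and use monotonicity of $h\in\mathcal S^+$ to carry that convexity back to $f_Y$. Your Hoeffding step even tightens one point the paper leaves implicit: since $\mathbb E\{(X-h(Y+Z))^2\}=\mathbb E\{X^2\}+\mathbb E\{h(Y+Z)^2\}-2\,\mathbb E\{X\widetilde h(Y)\}$ and only the last term depends on the coupling of $(X,Y)$, the comonotone coupling is indeed optimal for fixed monotone $h$ and fixed $f_Y$.

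The gap is precisely where you located the ``main obstacle,'' and your proposed resolution does not close it. The claim that optimizing $h$ collapses the estimation term to $W_2^2(f_X,Q(f_Y))$ with $Q$ the (affine) convolution with $f_Z$ is not an identity. Choosing $h$ as the monotone rearrangement of $f_Y*f_Z$ onto $f_X$ makes $(\hat Y,\hat X)$ an optimal coupling of $f_{\hat Y}$ and $f_X$, so it is $\mathbb E\{(\hat Y-\hat X)^2\}$ that equals $W_2^2(f_{\hat Y},f_X)$ --- but the cost is $\mathbb E\{(X-\hat X)^2\}$, and the joint law of $(X,\hat X)$ is forced through the chain $X\to Y\to Y+Z\to\hat X$ with $Z$ independent of $X$, so the comonotone coupling of $X$ and $\hat X$ is not realizable. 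Concretely, for $X\sim\mathcal N(0,1)$, $Y=X$, $Z\sim\mathcal N(0,1)$: $W_2^2(f_X,f_Y*f_Z)=(\sqrt2-1)^2\approx 0.17$, whereas $\inf_h\mathbb E\{(X-h(Y+Z))^2\}=1/2$, and the law of the true optimal $\hat X=\hat Y/2$ gives $W_2^2(f_X,f_{\hat X})=(1-1/\sqrt2)^2\approx 0.09$; no affine-in-$f_Y$ operator reconciles these. (If $Q$ is instead allowed to depend on the optimizing $h$, it is no longer affine in $f_Y$ and the composition argument fails.) In fairness, the paper's own proof makes essentially the same leap --- it writes the estimation term as $\inf_h\inf_{f_V}W_2^2(f_X,f_V)$ with $V=h(Y+Z)$ and transfers convexity from $f_V$ to $f_Y$ --- so you have faithfully reproduced its strategy; but the explicit form in which you pin down the key identification is demonstrably false, and as written the argument does not establish that the MMSE term, under the constraint that $Z$ be independent of $X$, is convex in $f_Y$.
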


\begin{proof}
We will first express $J_{m} (X,Y)$ as a minimization over $f_Y(\cdot)$. Let us define $V=h(Y+Z)$ for a fixed $h(\cdot)$. Then, using Lemma \ref{useful3}, $J_{m} (X,Y)$ can be re-written as 
\begin{equation}
J_{m} (X,Y)=  \inf \limits_{h} \inf \limits_{f_{V|X}} \mathbb E\{(X-V)^2\}+\inf \limits_{f_{Y|X}} \lambda \mathbb E\{Y^2\}
\end{equation}
which is 
\begin{equation}
J_{m} (X,Y)=  \inf \limits_{h} \inf \limits_{f_V}  \left \{W_2^2(f_X,f_V)\right \}+\lambda \inf \limits_{f_{Y}} \mathbb E\{Y^2\} .
\label{upeq1}
\end{equation}

The first term in the right hand side of (\ref{upeq1}) is convex in $f_V$ since, $W_2^2(f_X,f_V)$ is convex in $f_V$ (due to Lemma \ref{useful2}-property 2) when $f_X$ is fixed, and the pointwise minimizer of a convex function is convex. Since $Y$ and $V$ are related  in one-to-one manner through $V=h(Y+Z)$ and $h(\cdot) \in {\cal {S}}^+$ , this term is convex also in $f_Y$. Since  $\mathbb E\{Y^2\} $ is linear in $f_Y(\cdot)$, we conclude that $J_{m}$ is the infimum of a convex functional of $f_Y(\cdot)$, where the infimum is taken over  $f_Y(\cdot)$. Hence, the solution is unique in $f_Y(\cdot)$. 

Given that the solution is unique, we can express $J_m$ as 
\begin{equation}
J_m(X,Z)= \inf \limits_{f_Y} J_r (X,Z, g)
\end{equation}
{\it a.e.} in $X$ and $Z$,  where $Y=g(X)$. Hence the functional we are interested is indeed  $J_r (X,Z, g)$ which is convex $f_Y(\cdot)$.

\end{proof}

A practically important consequence of Theorem \ref{theo4} is stated in the following corollary. 

\begin{cor}
$J_r$ is convex in $g(\cdot)$ where $ g(\cdot) \in {\cal S^+}$ (i.e., $g(\cdot)$ is monotone increasing).
\end{cor}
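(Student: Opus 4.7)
The plan is to establish convexity of $J_r$ in $g\in\mathcal{S}^+$ by combining Theorem \ref{theo4} with a quantile-function reformulation of the encoder. It is worth emphasizing at the outset that the claim is not automatic from Theorem \ref{theo4}: the map $g \mapsto f_Y$ is nonlinear, so a convex combination $g_\alpha = \alpha g_1 + (1-\alpha) g_2$ does not induce the mixture density $\alpha f_{Y_1}+(1-\alpha)f_{Y_2}$, and density-convexity does not transfer verbatim to function-convexity.

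First I would check that $\mathcal{S}^+$ is itself convex: any nonnegative combination of monotone increasing scalar functions is monotone increasing, so $g_\alpha\in\mathcal{S}^+$. Second, I would split $J_r = D^\star(g) + \lambda\,\mathbb E\{g(X)^2\}$ where $D^\star(g)=\inf_{h}\mathbb E\{(X-h(g(X)+Z))^2\}$. The power term is convex in $g$ directly, since $\mathbb E\{(\alpha g_1(X)+(1-\alpha)g_2(X))^2\}\le \alpha\,\mathbb E\{g_1(X)^2\}+(1-\alpha)\,\mathbb E\{g_2(X)^2\}$ by convexity of $y\mapsto y^2$ and linearity of expectation.

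Third, I would invoke the Wasserstein representation used in the proof of Theorem \ref{theo4}, $D^\star(g)=\inf_{h\in\mathcal{S}^+}W_2^2(f_X,f_V)$ with $V=h(g(X)+Z)$, together with the quantile identity $Q_Y(u)=g(F_X^{-1}(u))$ valid for monotone $g$. Under this identity, convex combinations of encoders correspond exactly to convex combinations of quantile functions, i.e., to McCann displacement interpolations in one-dimensional Wasserstein space. Along such interpolations the functional $W_2^2(f_X,\cdot)=\int_0^1(F_X^{-1}(u)-Q_{(\cdot)}(u))^2\,\mathrm d u$ is a squared $L^2$-norm and hence convex by Hilbert-space convexity.

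The main obstacle is the noise $Z$ and the monotone decoder $h$, which together make $f_V$ depend on $g$ through a convolution with $f_Z$ followed by a nonlinear change of variables rather than a pointwise affine operation. I would handle this by conditioning on $Z$: for each realization $z$ the shifted encoder $g(\cdot)+z$ is still monotone and affine in $g$, and composing with $h\in\mathcal{S}^+$ preserves both monotonicity and the $L^2$-in-quantile convexity structure. Integrating over $Z$ by linearity of expectation then yields convexity of $W_2^2(f_X,f_V)$ in $g$ for each fixed $h\in\mathcal{S}^+$, and taking the pointwise infimum over such $h$ preserves convexity, giving the claim.
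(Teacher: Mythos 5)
Your instinct that the corollary is not automatic from Theorem \ref{theo4} is correct, but the bridge you build does not hold, for two concrete reasons. First, the step where you compose with the decoder fails: conditioning on $Z=z$, the quantile function of $V_z=h(g(X)+z)$ is $Q_{V_z}(u)=h\bigl(g(F_X^{-1}(u))+z\bigr)$, and for a nonlinear monotone $h$ this is \emph{not} affine in $g$. Hence a pointwise convex combination $g_\alpha=\alpha g_1+(1-\alpha)g_2$ does not produce the convex combination $\alpha Q_{V_z^{(1)}}+(1-\alpha)Q_{V_z^{(2)}}$; the displacement-interpolation structure is destroyed, and the Hilbert-space convexity of $\|F_X^{-1}-Q\|_{L^2[0,1]}^2$ in $Q$ tells you nothing about convexity of $g\mapsto W_2^2(f_X,f_{V_z})$. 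The affineness you rely on survives only at the level of $Y=g(X)$, not after composition with the decoder. Second, even granting convexity in $g$ for each fixed $h$, your final sentence invokes the false principle that a pointwise infimum of convex functions is convex; only suprema preserve convexity, and an infimum does so only when the objective is \emph{jointly} convex in $(g,h)$, which you have not established (and which is doubtful, since $(x-h(g(x)+z))^2$ is a convex function composed with a map that is non-affine in the pair $(g,h)$).

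For comparison, the paper does not attempt convexity under pointwise mixtures of encoders at all. It identifies each monotone $g$ with its induced channel-input law through ${\cal F}_X(X)={\cal F}_Y(g(X))$ and, using the uniqueness from Theorem \ref{theo4}, reads the corollary as Theorem \ref{theo4} transported through this bijection: the relevant ``convex combination'' of $g_1$ and $g_2$ is the unique monotone encoder inducing the mixture $\alpha f_{Y_1}+(1-\alpha)f_{Y_2}$, not the function $\alpha g_1+(1-\alpha)g_2$. So the distinction you flag in your opening paragraph is precisely the one the paper's proof quietly elides; if you want convexity in the standard vector-space sense on ${\cal S}^+$, you need an argument that survives both the nonlinear decoder and the infimum over $h$, and the one you propose does neither.
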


\begin{proof}
There is one-to-one mapping between $Y$ and the encoder $g(\cdot)$ as ${\cal F}_X (X)={\cal F}_Y (g(X))$ where ${\cal F}_X $ and ${\cal F}_Y$ denote the cumulative distribution functions of $X$ and $Y$ respectively.It follows from Theorem \ref{theo4} that for any $f_{Y_1}$ and $f_{Y_2}$ and $1\geq \alpha \geq 0$
\begin{equation}
\alpha  J_r(f_{Y_1})+(1-\alpha )  J_r(f_{Y_2})  \geq J_r(\alpha f_{Y_1}+(1-\alpha )f_{Y_2}).
\end{equation}
Since $J_r(f_{Y})$ is achieved by a unique $g(\cdot) \in \cal G^+$, this implies that 
\begin{equation}
\alpha  J_r(f_{g_1})+(1-\alpha )  J_r(f_{g_2})  \geq J_r(\alpha f_{g_1}+(1-\alpha )f_{g_2}),
\end{equation}
which shows the convexity of $J_r$ in $g(\cdot)$, where $ g(\cdot) \in \cal G^+$. 
\end{proof} 

\begin{remark}
Note that the optimal mappings, i.e., the mappings that achieve the infimum in (\ref{gcost}) exist. To see this, we use the semi-lower continuity property of the $W_2(S, Q)$ in both $S$ and $Q$ as given in Lemma \ref{useful}. The set of $Y$ is compact since $\mathbb E \{Y^2\} \leq P_T$, hence the infimum in the problem definition is achievable. This result guarantees that any algorithm based on the necessary conditions of optimality will converge to a globally optimal solution. 
\end{remark}

\subsection{Optimality Conditions}

We proceed to develop the necessary conditions for optimality of the encoder and decoder subject to the average power constraint (\ref{power_cons}), in the general setting of $m, k \in \mathbb N$.
\subsubsection {Optimal Decoder Given Encoder}
Let $\boldsymbol g(\cdot)$ be fixed. Then the optimal decoder is the MMSE estimator of $\boldsymbol X$ given $\boldsymbol {\hat Y}=\boldsymbol {\hat y}$, i.e.,
\begin{equation}
 \boldsymbol h(\boldsymbol {\hat y})= \mathbb E \{\boldsymbol X|\boldsymbol {\hat y}\}
\end{equation}
Plugging the expressions for expectation, we obtain   
\begin{equation}
   {\boldsymbol h(\boldsymbol {\hat y})}= \int  \, {\boldsymbol x}  \, \normalfont f_{X|\hat Y} (\boldsymbol x, \boldsymbol {\hat y})  { \, \mathrm{d}\boldsymbol x}     .
\end{equation}
 Applying Bayes' rule
 \begin{equation}
 f_{X|\hat Y}({\boldsymbol x, \boldsymbol {\hat y}})= \dfrac{f_X({\boldsymbol x}) f_{\hat Y|X}(\boldsymbol x, \boldsymbol {\hat y}) }{  \int  \, f_X({\bf x})  \,  f_{\hat Y|X}(\boldsymbol x, \boldsymbol {\hat y})\,  \mathrm{d}\boldsymbol x}  
 \end{equation}
and noting that $f_{\hat Y|X}(\boldsymbol x, \boldsymbol {\hat y})= f_{Z}[{\boldsymbol {\hat y}-\boldsymbol g(\boldsymbol x)}]$, the optimal decoder can be written, in terms of known quantities, as        
\begin{equation} 
\label{decoder_map}
\boldsymbol h(\boldsymbol{\hat y})= \dfrac{ \int {\boldsymbol x}  \,  f_X( {\boldsymbol x})  \,  f_{Z} [{\boldsymbol {\hat y}-\boldsymbol g( \boldsymbol x)}]\, { \mathrm{d}\boldsymbol x} } {  \int\,     f_X({\boldsymbol x})  \,  f_{Z} [{\boldsymbol {\hat y}-\boldsymbol g(\boldsymbol x)}]  \mathrm{d}\boldsymbol x}    . 
\end{equation}  

\subsubsection {Optimal Encoder Given Decoder}

Let $\boldsymbol h (\cdot)$ be fixed. To obtain necessary conditions we apply the standard method in variational calculus \cite{Luenberger}:
\begin{equation}
\label{abeq}
\frac{\partial}{\partial \epsilon} \bigg|_{\epsilon =0} J \left [{\boldsymbol g}({\boldsymbol x})+\epsilon {\boldsymbol  \eta} ({\boldsymbol x} ) ,\boldsymbol h \right ] =0,
\end {equation}
i.e., we perturb the cost functional for all admissible\footnote{
Our admissibility definition does not need to be very restrictive since it is used to derive a necessary condition. Hence, the only condition required for the admissible functions is  to be (Borel) measurable, that the integrals
exist, and that we can change the order of integration and differentiation.} variation functions $\boldsymbol \eta({\boldsymbol x})$. Since the power constraint is accounted for in the cost function, the variation function $\boldsymbol \eta(\cdot)$ needs not be restricted to satisfy the power constraint (all measurable functions ${\boldsymbol \eta} : \mathbb R^m\rightarrow \mathbb R^k$ are admissible). Applying (\ref{abeq}), we get
\begin{align}
\int &\left \{\lambda {\boldsymbol g(\boldsymbol x)}-\int {\boldsymbol h'(\boldsymbol g(\boldsymbol x)+\boldsymbol z)}[ {\boldsymbol x}-{\boldsymbol h (\boldsymbol g(\boldsymbol x)+\boldsymbol z)}] f_Z({\boldsymbol z})  \mathrm{d}\boldsymbol z \right \}{ \boldsymbol \eta({\boldsymbol x} ) } f_X({\boldsymbol x}) {\mathrm{d}\boldsymbol x}=0
\end{align}
\noindent where $\boldsymbol h'(\cdot) $ denotes the Jacobian of the vector valued function ${\boldsymbol h (\cdot)}$. Equality for all admissible variation functions, $\boldsymbol \eta(\cdot)$, requires the expression in braces to be identically zero (more formally the functional derivative \cite{Luenberger} vanishes at an extremum point of the functional). This gives the necessary condition for optimality as
\begin{equation}
\label {encoder_map} \nabla_{\boldsymbol g} J[{\boldsymbol g,\boldsymbol h}]=0, 
\end{equation}
where 
\begin{align}
 \nabla_{\boldsymbol g} J[{\boldsymbol g,\boldsymbol h}] =  \lambda  f_X({\boldsymbol x}) {\boldsymbol g(\boldsymbol x)}- f_X({\boldsymbol x})   \int   \boldsymbol  h'(\boldsymbol g(\boldsymbol x)\!+\boldsymbol z)  [{\boldsymbol x\!-\!\boldsymbol h(\boldsymbol g(\boldsymbol x)\!+\! \boldsymbol z)}]  f_Z({\boldsymbol z}) {\mathrm{d} {\boldsymbol z}} 
 \end{align}
 
\begin{remark} Unlike the decoder, the optimal encoder is not in closed form. 
\end{remark}
  We summarize the results in this section, in the following theorem.
  
   \begin {theorem}
 \label{pptheo}
 Given source and noise densities, a coding scheme ($\boldsymbol g(\cdot), \boldsymbol h(\cdot)$) is optimal {\it only if}
\begin{equation}
\label{enc}
{\boldsymbol g(\boldsymbol x)}  \! =\!  \frac{1}{ \!\lambda } \int { {\boldsymbol h'(g(\boldsymbol x)\!+\!z)  [{\boldsymbol  x\!-\!\boldsymbol h(\boldsymbol g(\boldsymbol x)\!+\!\boldsymbol z)}]}    f_Z({\boldsymbol z})   {\mathrm{d} {\boldsymbol  z}} }
\end{equation}        
         \begin{equation}
          {\boldsymbol h(\boldsymbol {\hat y)}}= \dfrac{{  \int} {\boldsymbol x}  \,  f_X( {\boldsymbol x})  \,  f_{Z} [{\boldsymbol {\hat y}-\boldsymbol g(\boldsymbol x)}]\, \mathrm{d} {\boldsymbol x} } {  \int  \,     f_X({\boldsymbol x})  \,  f_{Z} \left [{\boldsymbol {\hat y}-\boldsymbol g(\boldsymbol x)}\right ]  \, \mathrm{d} {\boldsymbol x}}
  \label{iki} 
    \end{equation}   
          where varying $\lambda$ provides solutions at different levels of power constraint $P_T$. In fact,  $\lambda$ is the slope of the distortion-power curve: $\lambda= -\frac{d D}{d P_T }$. Moreover, if $m=k=1$, these conditions are also sufficient for optimality. 
 \end {theorem}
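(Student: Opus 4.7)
The plan is to handle the three components of the statement in sequence --- optimal decoder, optimal encoder, and the interpretation of $\lambda$ --- and then close out sufficiency in the scalar case by invoking the convexity results proved earlier in the section. For the decoder condition (\ref{iki}), with $\boldsymbol g$ held fixed the only part of $J$ that depends on $\boldsymbol h$ is the distortion $\mathbb E\{||\boldsymbol X - \boldsymbol h(\boldsymbol{\hat Y})||^2\}$; classical MMSE estimation identifies the minimizer as $\boldsymbol h(\boldsymbol{\hat y}) = \mathbb E\{\boldsymbol X \mid \boldsymbol{\hat Y} = \boldsymbol{\hat y}\}$. Writing out the conditional expectation via Bayes' rule and substituting $f_{\hat Y \mid X}(\boldsymbol x, \boldsymbol{\hat y}) = f_Z(\boldsymbol{\hat y} - \boldsymbol g(\boldsymbol x))$ (from independence of $\boldsymbol X$ and $\boldsymbol Z$) produces the closed-form expression.

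For the encoder condition (\ref{enc}) I would carry out the variational perturbation (\ref{abeq}) already set up in the text. Substituting $\boldsymbol g + \epsilon \boldsymbol \eta$ into $J$, differentiating under the integral sign, applying the chain rule through $\boldsymbol h$ (so that the Jacobian $\boldsymbol h'$ emerges from the distortion term) and setting $\epsilon=0$ yields an identity of the form $\int \{\,\cdot\,\}\, \boldsymbol \eta(\boldsymbol x)\, f_X(\boldsymbol x)\, \mathrm d\boldsymbol x = 0$ valid for every admissible $\boldsymbol \eta$. The fundamental lemma of the calculus of variations (du Bois-Reymond) then forces the bracketed expression to vanish $f_X$-almost everywhere, and rearranging for $\boldsymbol g(\boldsymbol x)$ gives (\ref{enc}). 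The identity $\lambda = -dD/dP_T$ is then the standard envelope-theorem consequence of the Lagrangian relaxation of the power-constrained problem: at an optimum the sensitivity of the minimum distortion to the constraint level $P_T$ equals $-\lambda$.

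The last and most delicate piece is sufficiency when $m=k=1$. The MMSE form of $h$ is automatically the global minimizer of $J(X,Z,g,h)$ over $h$ for any fixed $g$, so the task reduces to showing that any $g$ satisfying (\ref{enc}) globally minimizes $J_r(X,Z,g)$. Theorem \ref{theo4} together with its corollary establishes that $J_r$ is convex in $f_Y$, and, through the one-to-one correspondence between $f_Y$ and $g \in \mathcal S^+$, convex in $g$ over the monotone class; on a convex domain the stationary points of a convex functional are global minimizers, so (\ref{enc})--(\ref{iki}) become sufficient once we restrict to $g \in \mathcal S^+$, absorbing the sign ambiguity already noted into this restriction. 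The main technical obstacle I expect is twofold: justifying the exchange of differentiation and integration and the applicability of the fundamental lemma against the full Borel-measurable variation class in the variational step, and then bridging convexity of $J_r$ in $f_Y$ to convexity in the encoder mapping itself --- so that stationarity of (\ref{enc}) really certifies a global optimum modulo the essentially-unique sign flip rather than merely a critical point.
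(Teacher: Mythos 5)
Your proposal follows essentially the same route as the paper: the decoder condition via MMSE estimation and Bayes' rule, the encoder condition via the variational perturbation and vanishing of the functional derivative, the envelope-theorem reading of $\lambda$, and sufficiency for $m=k=1$ via the convexity of $J_r$ in $f_Y$ transferred to $g \in \mathcal{S}^+$ (Corollary 1). The technical caveats you flag (interchange of differentiation and integration, and the bridge from convexity in $f_Y$ to convexity in the encoder) are precisely the points the paper treats informally, so there is no substantive divergence.
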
 
\begin{proof}
The necessary conditions were derived in (28), (31) and (32). The sufficiency in the case of $m=k=1$ follows from Corollary 1.
\end{proof} 

The theorem states the necessary conditions for optimality but they are not sufficient in general, as is demonstrated in particular by the following corollary. 

\begin{cor}
For Gaussian source and channel, the necessary conditions of Theorem \ref{pptheo} are satisfied by linear mappings $\boldsymbol g(\boldsymbol x)= K_e\boldsymbol x$ and $\boldsymbol h(\boldsymbol y)= K_d\boldsymbol y$  for some $K_e \in \mathbb R^{m\times k},   K_d \in \mathbb R^{k\times m}$ for any $m,k \in \mathbb N$. 
\label{maincor}
\end{cor}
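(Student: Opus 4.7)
The plan is to verify, by direct substitution, that linear mappings $\boldsymbol g(\boldsymbol x) = K_e \boldsymbol x$ and $\boldsymbol h(\boldsymbol y) = K_d \boldsymbol y$ satisfy both necessary conditions (\ref{enc}) and (\ref{iki}) for appropriately chosen matrices $K_e$ and $K_d$. I work throughout in the jointly Gaussian setting $\boldsymbol X \sim \mathcal N(\boldsymbol 0, R_X)$ and $\boldsymbol Z \sim \mathcal N(\boldsymbol 0, R_Z)$ with $\boldsymbol X$ and $\boldsymbol Z$ independent.

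First I would dispatch the decoder condition. The right-hand side of (\ref{iki}) is by construction $\mathbb E[\boldsymbol X \mid \hat{\boldsymbol Y}]$. Under a linear encoder $\boldsymbol g(\boldsymbol x) = K_e \boldsymbol x$, the pair $(\boldsymbol X, \hat{\boldsymbol Y}) = (\boldsymbol X,\, K_e \boldsymbol X + \boldsymbol Z)$ is jointly Gaussian, so its conditional expectation is linear in $\hat{\boldsymbol Y}$. The standard Wiener formula identifies it as $\boldsymbol h(\hat{\boldsymbol y}) = K_d \hat{\boldsymbol y}$ with $K_d = R_X K_e^\top (K_e R_X K_e^\top + R_Z)^{-1}$, so (\ref{iki}) is automatically satisfied for any $K_e$ once $K_d$ is defined by this expression.

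Next I would substitute the resulting linear decoder back into (\ref{enc}). Because $\boldsymbol h$ is linear, its Jacobian $\boldsymbol h'$ reduces to the constant matrix $K_d$ (up to the transpose convention implicit in (\ref{enc})); the $\boldsymbol z$-integral then collapses using $\mathbb E[\boldsymbol Z] = \boldsymbol 0$, leaving an affine expression in $\boldsymbol x$. Matching coefficients with $K_e \boldsymbol x$ reduces (\ref{enc}) to the single matrix identity
\[(\lambda I + K_d^\top K_d)\, K_e = K_d^\top.\]

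The step I expect to be the main obstacle is confirming that the coupled system $K_d = R_X K_e^\top (K_e R_X K_e^\top + R_Z)^{-1}$ together with $(\lambda I + K_d^\top K_d) K_e = K_d^\top$ admits a solution for every $\lambda > 0$. My plan is to whiten the source and noise via $R_X^{1/2}$ and $R_Z^{1/2}$ and then take an SVD of the resulting rescaled encoder, which decouples the system into one scalar quadratic equation per singular direction, each solvable in closed form with $\lambda$ as a free parameter. Mismatched dimensions $m \neq k$ are handled by truncating or zero-padding the singular value distribution, so the construction works for arbitrary $m, k \in \mathbb N$. Combined with the preceding steps, this exhibits an explicit linear pair $(K_e, K_d)$ satisfying both necessary conditions, establishing the corollary.
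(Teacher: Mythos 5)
Your proposal is correct and follows essentially the same route as the paper: verify that the MMSE decoder for a linear encoder is linear in the jointly Gaussian setting (so (\ref{iki}) holds), and that substituting a linear decoder into (\ref{enc}) reduces it to a matrix identity relating $K_e$, $K_d$ and $\lambda$. You are in fact more thorough than the paper, whose two-sentence proof asserts these facts without explicitly checking that the coupled system for $(K_e,K_d)$ is solvable; your whitening-plus-SVD argument supplies that missing detail.
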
 

\begin{proof}
Linear mappings satisfy the first necessary condition, (\ref{enc}), regardless of the source and channel densities. Optimal decoder is linear in the Gaussian source and channel setting, hence the linear encoder-decoder pair satisfies both of the necessary conditions of optimality. 
\end{proof}

Although linear mappings satisfy the necessary conditions of optimality for the Gaussian case, they are known to be highly suboptimal when dimensions of source and channel do not match, i.e., $m \neq k$, see e.g. \cite{basar1980performance}. Hence, this corollary illustrates the existence of poor local optima and the challenges facing algorithms based on these necessary conditions.

\subsubsection{Extension to distributed settings}
Optimality conditions for the setting of decoder side information can be obtained by following similar steps. However, they involve somewhat more complex expressions and are relegated to the appendix.  We note, in particular, that for these settings a similar result to Corollary \ref{maincor} holds, i.e., for Gaussian sources and channels linear mappings satisfy the necessary conditions. Perhaps surprisingly, even in the matched bandwidth case, e.g., scalar source, channel and side information, linear mappings are strictly suboptimal. This observation highlights the need for powerful numerical optimization tools.

\section{Algorithm Design}

The basic idea is to iteratively alternate between the imposition of individual necessary conditions for optimality, and thereby successively decrease the total Lagrangian cost. Iterations are performed until the algorithm reaches a stationary point. Imposing optimality condition for the decoder is straightforward, since the decoder can be expressed as  closed form functional of known quantities, ${\boldsymbol g} (\cdot)$, $ f_X(\cdot)$ and $f_Z(\cdot)$. The encoder optimality condition is not in closed form and we perform steepest descent search in the direction of the functional derivative of the Lagrangian with respect to the encoder mapping $\boldsymbol g(\cdot)$. By design, the Lagrangian cost decreases monotonically as the algorithm proceeds iteratively. The update for the various encoders is stated generically as
\begin{equation}
\label{g_iter} {\boldsymbol g}_{i+1}({\boldsymbol x})= {\boldsymbol g}_{i}({\boldsymbol x})-\mu \nabla_{\boldsymbol g}\boldsymbol J[\boldsymbol g, \boldsymbol h], 
\end{equation} 
where $i$ is the iteration index, $\nabla_{\boldsymbol g} \boldsymbol J[\boldsymbol g,{\boldsymbol h}]$ is the directional derivative, and $\mu$ is the step size.
At each iteration $i$, the total cost decreases monotonically and iterations are continued until convergence.  Previously proposed heuristic suboptimal mappings \cite{chung, ramstad} can be used as initialization for the encoder mapping optimization. Note that there is no guarantee that an iterative descent algorithms of this type will converge to the globally optimal solution. The algorithm will converge to a local minimum. An important observation is that, in the case of Gaussian sources and channels, the linear encoder-decoder pair satisfies the necessary conditions of optimality, although, as we will illustrate, there are other mappings that perform better. Hence, initial conditions have paramount importance in such greedy optimizations. A preliminary low complexity approach to mitigate the poor local minima problem, is to embed in the solution the noisy relaxation method of \cite{gadkari1999robust, Knagenhjelm}. We initialize the encoding mapping with random initial conditions and run the algorithm at very low CSNR (high Lagrangian parameter $\lambda$). Then, we gradually increase the CSNR (decrease $\lambda$) while tracking the minimum until we reach the prescribed CSNR (or power $P_T$ for a given channel noise level). The numerical results of this algorithm is presented in Section VI.

\section{On Linearity of Optimal Mappings}

 In this section, we address the problem of ``linearity" of optimal encoding and decoding mappings. Our approach builds on \cite{emrah_estimation}, where conditions for linearity of optimal estimation are derived, and the convexity result in Theorem 2 and the necessary conditions for optimality presented in Theorem 3. In this section, we focus on the scalar setting, $m=k=1$, while our results can  be extended to more general settings.   

\subsection{Gaussian Source and Channel}
We briefly revisit the special case in which both $X$ and $Z$ are Gaussian, $X\sim \mathcal N(0,\sigma_X^2)$ and $Z \sim \mathcal  N(0,\sigma_Z^2)$. It is well known that the optimal mappings are linear, i.e., $g(X)=k_eX$ and $h(Y)= k_dY$ where $k_e$ and $k_d$ are given by
\begin{equation}
k_e=\sqrt {\frac{ P_T}{\sigma_X^2}}, \quad  k_d=\frac{1}{k_e} \left (\frac{P_T}{P_T+\sigma_Z^2} \right ).
\label{Gauss}
\end{equation}


\subsection{On Simultaneous Linearity of  Optimal Encoder and Decoder}
In this section, we show that optimality requires that either both mappings be linear or that they both be nonlinear, i.e., a linear encoder with a nonlinear decoder, or a nonlinear encoder in conjunction with  a linear decoder, are both strictly suboptimal. We show this in two steps in the following lemmas. 

\begin{lemma}
\label{lemma1}
The optimal encoder is linear {\it a.e.} if the optimal decoder is linear. 
\end{lemma}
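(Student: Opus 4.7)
The plan is to substitute a linear decoder directly into the necessary condition for the optimal encoder stated in Theorem \ref{pptheo}, equation (\ref{enc}), and solve the resulting algebraic identity for $g(\cdot)$. Since we are in the scalar case $m=k=1$, Theorem \ref{pptheo} tells us that the necessary condition (\ref{enc}) is also sufficient, so any $g$ satisfying it with the given linear $h$ is optimal, and uniqueness in $f_Y$ from Theorem \ref{theo4} lets us promote this to ``linear \emph{a.e.}''

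First I would assume $h(\hat y)=k_d \hat y$ for some constant $k_d$ (the case $k_d=0$ is degenerate and can be handled separately, since it forces the decoder to output $0$, which is never optimal unless $X=0$ \emph{a.s.}). Then $h'(\hat y)=k_d$ is constant, so (\ref{enc}) becomes
\begin{equation}
g(x)=\frac{k_d}{\lambda}\int\bigl[x-k_d(g(x)+z)\bigr]f_Z(z)\,\mathrm{d}z.
\end{equation}
Pulling the $x$- and $g(x)$-dependent terms out of the integral (they do not depend on $z$) and invoking the zero-mean assumption $\mathbb E\{Z\}=\int z f_Z(z)\,\mathrm{d}z=0$, the right-hand side collapses to $\frac{k_d}{\lambda}\bigl[x-k_d g(x)\bigr]$. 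Collecting terms yields
\begin{equation}
g(x)\Bigl(1+\tfrac{k_d^2}{\lambda}\Bigr)=\tfrac{k_d}{\lambda}\,x,
\end{equation}
so $g(x)=\tfrac{k_d}{\lambda+k_d^2}\,x$ for every $x$ at which (\ref{enc}) holds, i.e.\ \emph{a.e.}

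The proof is essentially a one-line computation, so there is no real obstacle; the only care required is (i) to note that the linearity of $h$ makes $h'$ a \emph{constant} (otherwise it could not be extracted from the $z$-integral), and (ii) to invoke $\mathbb E\{Z\}=0$ to kill the noise term. I would close by remarking that sufficiency of (\ref{enc}) in the scalar case (Corollary 1 / Theorem \ref{pptheo}) guarantees that this $g$ is indeed optimal, not merely a stationary point, and that by Theorem \ref{theo4} any other optimal encoder must induce the same $f_Y$ and hence agree with the above affine map \emph{a.e.}
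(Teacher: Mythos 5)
Your proof is correct and follows essentially the same route as the paper: substitute $h(\hat y)=k_d\hat y$ into the encoder optimality condition (\ref{enc}), use that $h'$ is the constant $k_d$ together with $\mathbb E\{Z\}=0$ to evaluate the integral, and solve to get $g(x)=\frac{k_d}{\lambda+k_d^2}x$ \emph{a.e.} The closing remarks on sufficiency and uniqueness in $f_Y$ are extra safeguards the paper does not include at this point, but they do not change the argument.
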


\begin{proof}
Let us plug $h(y)=k_dy$ for some $k_d \in \mathbb R$ in the necessary condition of optimality (\ref{enc}). Noting that $h'(y)=k_d$ {\it a.e.} in $y$, we have  
\begin{equation}
\lambda g(x)=k_d \int (x-k_d g(x)-k_dz) f_Z(z) dz 
\label{lin1}
\end{equation} 
{\it a.e.}  in $x$. Evaluating the integral and noting that $\mathbb E\{Z\}=0$, we have 
\begin{equation}
\label{linn2}
\lambda g(x)=k_d (x-k_d g(x))
\end{equation} 
{\it a.e.} and hence $g(x)=\frac{k_d}{\lambda+k_d^2}x \triangleq k_e x$.  \end{proof}
 
\begin{lemma}
The optimal decoder is linear {\it a.e.} if the optimal encoder is linear. 
\label{lemma2}
\end{lemma}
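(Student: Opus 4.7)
The plan is to mirror the proof of Lemma \ref{lemma1} but with the roles of encoder and decoder swapped, combining both necessary conditions of Theorem \ref{pptheo}. Assume the optimal encoder is linear, $g^*(x) = k_e x$ for some $k_e \in \mathbb R$, so that $\hat Y = k_e X + Z$. The corresponding optimal decoder must be the MMSE estimator $h^*(\hat y) = \mathbb E\{X \mid k_e X + Z = \hat y\}$ given in closed form by (\ref{iki}). Because $(g^*, h^*)$ is jointly optimal, the encoder necessary condition (\ref{enc}) must also hold; substituting $g^*(x) = k_e x$ into (\ref{enc}) yields the integral identity
\begin{equation*}
\lambda k_e x = \int (h^*)'(k_e x + z)\,[x - h^*(k_e x + z)]\, f_Z(z)\, dz, \quad \text{a.e. } x.
\end{equation*}

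The next step is to show that this identity, together with the MMSE form of $h^*$, forces $h^*$ to be affine. One route is to differentiate in $x$, regroup using the MMSE orthogonality principle $\mathbb E\{(X-h^*(\hat Y))\phi(\hat Y)\}=0$ for arbitrary test functions $\phi$, and reduce to the pointwise conclusion $(h^*)'(\hat y) = k_d$ a.e.\ for some constant $k_d$. A cleaner route I would pursue is to pass to the characteristic-function domain: convolution against $f_Z$ on the right translates into multiplication by $F_Z(\omega)$, while the left-hand side is the linear function $\lambda k_e x$. Matching spectra --- using invertibility of $F_Z$ on its support together with the analytic techniques developed in our companion work \cite{emrah_estimation} on linearity of MMSE estimators --- constrains the Fourier representation of $h^*$ in a way that is compatible only with an affine function. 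Finally, $\mathbb E\{X\}=0$ and $\mathbb E\{h^*(\hat Y)\}=\mathbb E\{X\}=0$ (by the MMSE tower property) eliminate any constant offset, so $h^*(\hat y) = k_d \hat y$ as claimed.

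The main obstacle is precisely the step of extracting a pointwise linearity statement for $h^*$ from the averaged integral identity above. Since both $x$ and the unknown $h^*$ appear nonlinearly inside a convolution against $f_Z$, naive manipulations in the physical domain produce only further averaged identities rather than a pointwise constraint on $h^*$. This is why I expect the Fourier-domain machinery of \cite{emrah_estimation} to be indispensable, and also why it is natural that the proof of this direction is genuinely harder than that of Lemma \ref{lemma1}, in which plugging a linear $h$ into (\ref{enc}) immediately collapses the convolution against $f_Z$ to a single number.
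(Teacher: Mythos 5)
Your starting point matches the paper's: substitute $g(x)=k_e x$ into the encoder necessary condition (\ref{enc}) to obtain the integral identity
\begin{equation*}
\lambda k_e x = \int \bigl(x - h(k_e x + z)\bigr)\, h'(k_e x + z)\, f_Z(z)\, dz \quad \text{{\it a.e.} in } x.
\end{equation*}
But from there the proposal has a genuine gap, which you yourself flag: you never actually extract linearity of $h$ from this identity. Neither of your two suggested routes is carried out, and neither is likely to go through as described. The Fourier route stumbles because the right-hand side is not a plain convolution in $\hat y$: writing it as $x\,(\phi * \tilde f_Z)(k_e x) - (\psi * \tilde f_Z)(k_e x)$ with $\phi = h'$ and $\psi = h'h$, the unknown $h$ enters \emph{quadratically} through $\psi = \tfrac12 (h^2)'$, so ``matching spectra'' does not reduce to inverting $F_Z(\omega)$ against a linear equation in the transform of $h$; the equation remains nonlinear in the unknown and the asserted conclusion (``compatible only with an affine function'') is exactly what needs to be proved. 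The differentiation-plus-orthogonality route introduces $h''$ and again yields only further averaged identities.

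The paper closes this gap with a different device: it invokes the Weierstrass theorem to write $h(y) = \sum_{r\ge 0}\alpha_r y^r$ (with uniform convergence justifying the interchange of limit, sum, and integral by dominated convergence), substitutes this series into the identity, expands $(k_e x + z)^{i}$ binomially, and equates the coefficients of each power $x^r$. The resulting infinite system is triangular — the equation at order $r$ introduces the single new unknown $\alpha_{r+1}$ linearly in terms of lower-order coefficients and moments of $Z$ — so it has a unique solution; since $\alpha_r = 0$ for all $r\ge 2$ is manifestly a solution, it is the only one, and $h$ is linear. This coefficient-matching argument is the essential content of the lemma and is absent from your proposal, so the proof is incomplete as written. (Your closing observation that this direction is genuinely harder than Lemma \ref{lemma1} is correct, and is precisely why the paper needs the polynomial expansion here and not there.)
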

\begin{proof}
Plugging $g(x)=k_ex$ for some $k_e \in \mathbb R$ in the necessary condition of optimality (\ref{enc}), we obtain
\begin{equation}
\lambda k_e x=\int (x-h(k_e x+z)) h'(k_ex+z) f_Z(z) dz 
\label{lin2}
\end{equation} 
{\it a.e.} in $x$. Since $h(\cdot)$ is a function from $ \mathbb R  \rightarrow \mathbb R$, Weierstrass theorem \cite{dudley2002real} guarantees that there is a sequence of real valued polynomials that uniformly converges to it:
\begin{equation}
\label{lin13}
h(y)=\lim \limits_{i\rightarrow \infty}\sum \limits_{r=0}^{\infty} \alpha_r(i)y^r
\end{equation} 
where $\alpha_r(i)\in\mathbb R$  is the $r^{th}$ polynomial coefficient of the $i^{th}$ polynomial. Since Weierstrass convergence is uniform in $y$, we can interchange the limit and summation and hence, 
\begin{equation}
\label{lin3}
h(y)=\sum \limits_{r=0}^{\infty} \alpha_r y^r
\end{equation} 
{\it a.e.} in $y$, where $\alpha_r=\lim \limits_{i\rightarrow \infty}\alpha_r(i)$. Plugging (\ref{lin3}) in (\ref{lin2}) we obtain 
\begin{align}
\label{lin4}
\lambda k_e x={\int} \left (x-\sum \limits_{i=0}^{\infty} \alpha_i (k_e x+z)^i \right )\left (\sum \limits_{i=0}^{\infty}  i\alpha_i (k_e x+z)^{i-1} \right ) f_Z(z)   dz .
\end{align} 
Interchanging the summation and integration\footnote{Since the polynomials  $\sum \limits_{i=0}^{\infty} \alpha_i (k_e x+z)^i$ and $\sum \limits_{i=0}^{\infty} i \alpha_i (k_e x+z)^{i-1}$  respectively converge to $h(k_e x+z)$ and $h(k_e x+z) $ uniformly in $x$ and $z$, and hence both upper bounded in magnitude, we can use Lebesgue's dominated convergence theorem to interchange the summation and the integration.}
\begin{align}
\label{lin6}
-\lambda k_e x+ x-\sum \limits_{i=0}^{\infty} i \alpha_i \int (k_e x+z)^{i-1} f_Z(z)   dz   =\sum \limits_{i=0}^{\infty}  \sum  \limits_{j=0}^{\infty}  i\alpha_i \alpha_j \int (k_e x+z)^{i-1}  (k_e x+z)^{j}    f_Z(z)   dz .
\end{align} 
Note that the above equation must hold {\it a.e.} in $x$, hence the coefficients of $x^r$ must be identical for all $r \in \mathbb N$. Opening up the expressions $(k_e x+z)^{i-1}$ and $(k_e x+z)^{j} $ via binomial expansion, we have  the following set of equations 
\begin{align}
\label{lin5}
 \sum \limits_{i=r+1}^{\infty} i  \binom{i-1}{r} \alpha_i\mathbb E\{Z^{i-1-r}\}  =\sum  \limits_{i=0}^{\infty}\sum  \limits_{j=0}^{\infty}  \sum \limits_{l=0}^{i-1} \sum \limits_{p=r-l+1}^{j-1} \binom {j}{p} \binom{i-1}{l}  i\alpha_i \alpha_j \mathbb E\{Z^{i+j-1-p-l}\} 
\end{align} 
which must hold for all $r \geq 2$.

We note that every equation introduces a new variable $\alpha_r$,  so each new equation is linearly independent of its predecessors. Next, we solve these equations recursively, starting from $r=1$. At each $r$, we have one unknown ($\alpha_r$) which is related ``linearly" to known constants. Since the number of linearly independent equations is equal to the number of unknowns for each $r$, there must exist a unique solution. We know that $\alpha_r =0,$ for all $r \geq 2$  is a solution to (\ref{lin5}),  so it is the only solution.

\end{proof}

Next, we summarize our main result pertaining to the simultaneous linearity of optimal encoder and decoder.

\begin{theorem}
\label{simultenous}
The optimal mappings are either both linear or they are both nonlinear.
\end{theorem}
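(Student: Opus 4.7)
The theorem is essentially a direct corollary of the two preceding lemmas, so my plan is simply to package them into the contrapositive statement and handle the logical bookkeeping cleanly.

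First I would argue one direction: if the optimal encoder is linear, then by Lemma \ref{lemma2} the optimal decoder is linear \emph{a.e.}, and conversely, if the optimal decoder is linear, then by Lemma \ref{lemma1} the optimal encoder is linear \emph{a.e.} Hence linearity of one mapping forces linearity of the other. This gives the ``both linear'' case.

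Second, I would derive the ``both nonlinear'' case by contraposition. Suppose, for contradiction, that in an optimal pair exactly one of the two mappings is linear. If the encoder is linear while the decoder is nonlinear, Lemma \ref{lemma2} forces the decoder to be linear \emph{a.e.}, contradicting our assumption. Symmetrically, if the decoder is linear while the encoder is nonlinear, Lemma \ref{lemma1} forces the encoder to be linear \emph{a.e.}, again a contradiction. Thus a mixed linear/nonlinear pair cannot be optimal, leaving only the two stated possibilities.

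I do not foresee a serious obstacle here, since all of the analytical work has already been carried out in Lemmas \ref{lemma1} and \ref{lemma2}. The only minor subtlety worth acknowledging in writing is the qualifier ``\emph{a.e.}''---the lemmas yield linearity almost everywhere, so the theorem should be read in the same almost-everywhere sense. I would add a short concluding sentence pointing out that, combined with Corollary \ref{maincor}, this dichotomy explains why the Gaussian source–channel case sits cleanly on the ``both linear'' branch, while any deviation from joint Gaussianity that breaks linearity on one side necessarily breaks it on the other.
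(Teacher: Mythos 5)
Your proposal is correct and takes essentially the same route as the paper, whose proof of this theorem is literally the one-line observation that it follows directly from Lemma \ref{lemma1} and Lemma \ref{lemma2}; your contrapositive bookkeeping and the \emph{a.e.} caveat are just a more explicit write-up of the same argument.
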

\begin{proof}
The proof directly follows from Lemma \ref{lemma1} and Lemma \ref{lemma2}.
\end{proof}

\subsection{Conditions for Linearity of Optimal Mappings}
In this section, we study the condition for linearity of optimal encoder and/or decoder. Towards obtaining our main result, we will use the following auxiliary lemma. 

\begin{lemma}
\label{useful}
The linear encoder and decoder in (\ref{Gauss}) satisfy the first of the necessary conditions of optimality (\ref{enc}) regardless of the source and channel densities. 
\end{lemma}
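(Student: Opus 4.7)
The plan is essentially a direct substitution argument: plug $g(x)=k_e x$ and $h(y)=k_d y$ into the first necessary condition of optimality (\ref{enc}) and verify that the equation holds pointwise in $x$ for the appropriate Lagrange multiplier $\lambda$, with no dependence on the shapes of $f_X(\cdot)$ or $f_Z(\cdot)$ beyond the standing zero-mean assumption on $Z$.

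Concretely, I would proceed in three short steps. First, since $h(y)=k_d y$ is affine, its derivative $h'(y)=k_d$ is constant, so the integrand on the right-hand side of (\ref{enc}) simplifies to $k_d[x - k_d g(x) - k_d z]\,f_Z(z)$. Second, integrating against $f_Z$ and using the normalization $\int f_Z(z)\,dz = 1$ together with the zero-mean hypothesis $\int z\,f_Z(z)\,dz = 0$, the right-hand side collapses to $k_d\bigl[x - k_d g(x)\bigr]$. At this point the channel density $f_Z$ has entirely disappeared and the source density $f_X$ never entered the calculation, since (\ref{enc}) is a pointwise identity in $x$ obtained after cancelling the $f_X(x)$ factor in $\nabla_{\boldsymbol g}J$. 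Third, equating this with $\lambda g(x) = \lambda k_e x$ and solving gives the algebraic relation
\begin{equation}
\lambda = \frac{k_d}{k_e} - k_d^2,
\end{equation}
which, for the particular $k_e,k_d$ listed in (\ref{Gauss}), reduces to $\lambda = \sigma_X^2\sigma_Z^2/(P_T+\sigma_Z^2)^2$, matching the slope $-dD/dP_T$ of the Gaussian OPTA curve and confirming internal consistency.

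Since the derivation never invoked Gaussianity of either $X$ or $Z$, and invoked only the first moment of $Z$ (which is zero by the standing assumption), the same linear pair $(k_e x, k_d y)$ satisfies the necessary condition (\ref{enc}) for arbitrary $f_X$ and any zero-mean $f_Z$. I do not anticipate a genuine obstacle here; the only subtlety is bookkeeping to confirm that the $f_X(x)$ factor appearing in $\nabla_{\boldsymbol g}J$ can be divided out (valid wherever $f_X(x)>0$, which is all that is required for an \emph{a.e.} statement), and to observe explicitly that the \emph{second} necessary condition (\ref{iki}) is what forces a density assumption on $X$ and $Z$, whereas the \emph{first} condition does not. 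This asymmetry is precisely what the lemma captures and what makes it useful as a stepping stone toward the subsequent linearity conditions.
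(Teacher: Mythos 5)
Your proposal is correct and follows exactly the paper's route: the paper's proof is the one-line remark that the result ``directly follows from substitution of (\ref{Gauss}) in (\ref{enc})'', and your three steps simply carry out that substitution explicitly (constant Jacobian $h'=k_d$, zero-mean noise killing the $z$ term, and the resulting identity $\lambda k_e = k_d(1-k_ek_d)$, whose value $\lambda=\sigma_X^2\sigma_Z^2/(P_T+\sigma_Z^2)^2$ indeed matches $-dD_{OPTA}/dP_T$). The added consistency check against the OPTA slope and the remark about which necessary condition actually constrains the densities are correct but not part of the paper's argument.
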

 \begin{proof}
 The proof directly follows from substitution of  (\ref{Gauss})  in (\ref{enc}).
 \end{proof}
 
 The following theorem presents the necessary and sufficient condition for linearity of optimal encoder and decoder mappings. 

\begin{theorem}
For a given power limit $P_T$, noise $Z$ with variance $\sigma_Z^2$ and characteristic function $F_Z(\omega)$, source $X$ with variance $\sigma_X^2$ and characteristic function $F_X(\omega)$, the optimal encoding and decoding mappings are linear {\it if and only if}
\begin{equation}
\label{matching}
F_X  (\alpha \omega)=F_Z^\gamma (\omega),
\end{equation}
where $\gamma =\frac{P_T}{\sigma_Z^2}$ and $\alpha=  \sqrt \frac{P_T}{\sigma_X^2}$.

\end{theorem}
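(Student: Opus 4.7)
The plan is to assemble four results already at our disposal: Theorem~\ref{simultenous}, which lets us restrict attention to the case where both mappings are linear; Lemma~\ref{useful}, which guarantees that the linear pair from (\ref{Gauss}) automatically satisfies the encoder stationarity condition (\ref{enc}) irrespective of the source and noise densities; Theorem~\ref{pptheo}, whose necessary conditions are also sufficient in the scalar case; and the linearity-of-MMSE-estimation characterization of \cite{emrah_estimation}, which states that for independent zero-mean random variables $U,V$ the conditional expectation $\mathbb E[U\mid U+V]$ is linear in $U+V$ if and only if $F_U(\omega)=F_V(\omega)^{\sigma_U^2/\sigma_V^2}$.

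For the ``if'' direction, I would pick the candidate pair $g(x)=\alpha x$, $h(\hat y)=k_d \hat y$ from (\ref{Gauss}), which meets the power constraint with equality since $\alpha=\sqrt{P_T/\sigma_X^2}$. Lemma~\ref{useful} supplies the encoder stationarity condition immediately. What remains is to verify that this linear $h$ coincides with the true MMSE estimator $\mathbb E[X\mid \hat Y]$, where $\hat Y=\alpha X+Z$. Rewriting as $\mathbb E[X\mid X+Z/\alpha]$ and invoking the characterization from \cite{emrah_estimation} with $U=X$ and $V=Z/\alpha$, the ratio $\sigma_U^2/\sigma_V^2$ becomes $\alpha^2\sigma_X^2/\sigma_Z^2=P_T/\sigma_Z^2=\gamma$, and $F_V(\omega)=F_Z(\omega/\alpha)$, so linearity of the MMSE estimator is equivalent to $F_X(\omega)=F_Z(\omega/\alpha)^\gamma$. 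Applying the change of variables $\omega\mapsto\alpha\omega$ recovers precisely the matching condition (\ref{matching}), so both necessary conditions hold, and the sufficiency clause of Theorem~\ref{pptheo} promotes the pair to a global optimum.

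For the ``only if'' direction, suppose the optimal mappings are linear. By Theorem~\ref{simultenous} both $g$ and $h$ must then be linear. The power constraint is tight at the optimum (otherwise uniformly scaling $g$ upward and re-optimizing $h$ strictly decreases distortion), pinning $\alpha=\sqrt{P_T/\sigma_X^2}$. By the decoder part of Theorem~\ref{pptheo}, the optimal $h$ equals $\mathbb E[X\mid \hat Y]$ almost surely, and its linearity forces $\mathbb E[X\mid \alpha X+Z]$ to be linear in its argument. Running the characterization from \cite{emrah_estimation} in the reverse direction with the same identification of $U,V$ then yields the matching condition (\ref{matching}). The main difficulty I foresee is the careful bookkeeping of the scaling factor $\alpha$ and the exponent $\gamma$ when translating between the standard MMSE-linearity condition and the statement of this theorem; beyond that, the argument reduces to assembling results already established earlier in the paper.
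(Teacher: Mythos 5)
Your proposal is correct and follows essentially the same route as the paper's proof: restrict to simultaneous linearity via Theorem~\ref{simultenous}, dispose of the encoder stationarity condition with Lemma~\ref{useful}, reduce the decoder condition to linearity of the MMSE estimator of $X$ from $\alpha X+Z$ (with the same identification $\sigma_U^2/\sigma_V^2=\gamma$ and rescaling $\omega\mapsto\alpha\omega$), and obtain sufficiency from the essential uniqueness supplied by Corollary 1. The only difference is that the paper derives the characteristic-function matching condition inline, by Fourier-transforming the decoder optimality condition (\ref{iki}) and solving the resulting differential equation for $\log F_X$ and $\log F_Z$, rather than citing the estimation-theoretic characterization of \cite{emrah_estimation}; the computation is the same one you invoke.
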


\begin{proof}
Theorem \ref{simultenous} states that  the optimal encoder is linear if and only if optimal decoder is linear. Hence, we will only focus on the case where encoder and decoder are simultaneously linear. The first necessary condition is satisfied by Lemma \ref{useful}, hence  only the second necessary condition, (\ref{iki}) remains to be verified.

Plugging $g(X)=k_eX$ and $h({\hat Y})=k_d{\hat {Y}}$ in (\ref{iki}), we have
\begin{equation}
\label{upeq}
k_d{\hat y} =\dfrac{ { \int} {x}  \,  f_X( {x})  \,  f_{Z} ({{\hat y}-k_ex}) \mathrm{d} {x} } {  \int  \,     f_X({x})  \,  f_{Z} ({{\hat y}-k_ex})  \, \mathrm{d} {x}}.
\end{equation}
Expanding (\ref{upeq}), we obtain 
\begin{equation}
k_d{\hat y}   \int  \,     f_X({x})  \,  f_{Z} ({{\hat y}-k_ex})  \, \mathrm{d} {x}= {\int} {x}  \,  f_X( {x})  \,  f_{Z} \left({{\hat y}-k_ex}\right)\, \mathrm{d} {x}.
\end{equation}
Taking the Fourier transform of both sides and via change of variables $u\triangleq \hat y-k_ex $, we have 
\begin{align}
\int  \int  k_d(u+k_ex) f_X(x) f_Z(u) \exp(-j\omega (u+k_ex)) dx du =\int  \int  x f_X(x) f_Z(u) \exp(-j\omega (u+k_ex)) dx du
 \end{align}
and rearranging the terms, we obtain
 \begin{align}
 \left (\frac{1-k_ek_d}{k_ek_d} \right ) F_Z(\omega)F_X'(k_e\omega)=  F_X(k_e\omega)F_Z'(\omega).
 \end{align}
Noting that 
\begin{equation}
\gamma= \frac{P_T}{\sigma_Z^2}=\frac{k_ek_d}{1-k_ek_d},
\end{equation}
we have
  \begin{align}
 \frac{F_X'(k_e\omega)}{F_X(k_e \omega)}=\gamma \frac{  F_Z'(\omega)} {F_Z(\omega)}
 \end{align}
which implies
   \begin{align}
 (\log {F_X(k_e\omega)} )'= (\log  {F_Z ^{\gamma}(\omega)})'.
 \end{align}
The solution to this differential equation is
   \begin{align}
 \log {F_X (k_e\omega)}= \log { F_Z^{\gamma}(\omega)}+C
 \end{align}
 where $C$ is constant. Noting that $F_X (0)=F_Z (0)=1$, we determine $C=0$ and hence
  \begin{align}
 F_X(k_e \omega) =  F_Z ^{\gamma} (\omega).
 \label{cond}
 \end{align}
 Since the solution is essentially unique, due to Corollary 1, (\ref{cond}) is not only a necessary but also the sufficient condition for linearity of optimal mappings. 
\end{proof}

\subsection{Implications of the Matching Conditions}
In this section, we explore some special cases obtained by varying CNSR (i.e., $\gamma$)  and utilizing the matching conditions for linearity of optimal mappings. We start with a simple but perhaps surprising result.
\begin{theorem}
\label{match}
Given a source and noise of equal variance, identical to the power limit  ($\sigma_X^2=\sigma_Z^2=P_T$), the optimal mappings are linear {\em if and only if} the noise and source distributions are identical, i.e., $f_X(x)=f_Z(x), a.e.$ and in which case, the optimal encoder is $g(X)=X$ and the optimal decoder is $h(\hat Y)=\frac {1}{2}\hat Y$.
\end{theorem}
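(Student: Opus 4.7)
The plan is to obtain Theorem~\ref{match} as a direct specialization of the previous theorem's matching condition $F_X(\alpha\omega)=F_Z^\gamma(\omega)$ to the symmetric regime $\sigma_X^2=\sigma_Z^2=P_T$, together with a routine computation of the linear coefficients from (\ref{Gauss}).

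First I would substitute the hypothesis $\sigma_X^2=\sigma_Z^2=P_T$ into the definitions of the two structural parameters of the matching condition. This gives $\alpha=\sqrt{P_T/\sigma_X^2}=1$ and $\gamma=P_T/\sigma_Z^2=1$, so the matching condition collapses to the clean identity $F_X(\omega)=F_Z(\omega)$ for all $\omega$. By the uniqueness of the Fourier inversion for probability measures, equality of characteristic functions is equivalent to equality of densities almost everywhere, i.e.\ $f_X(x)=f_Z(x)$ \emph{a.e.} This delivers both directions of the ``if and only if'' statement in one stroke: the previous theorem asserts that the matching condition is both necessary and sufficient for simultaneous linearity of the optimal encoder and decoder, so linearity of the optimal mappings is equivalent to $f_X=f_Z$ almost everywhere under the variance hypothesis.

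Next I would identify the explicit linear coefficients. When the optimal mappings are linear, their slopes must be the $(k_e,k_d)$ produced by Lemma~\ref{useful} and formula~(\ref{Gauss}): $k_e=\sqrt{P_T/\sigma_X^2}$ and $k_d=(1/k_e)\cdot P_T/(P_T+\sigma_Z^2)$. Inserting $\sigma_X^2=\sigma_Z^2=P_T$ gives $k_e=1$ and $k_d=P_T/(2P_T)=1/2$, so the optimal encoder is $g(X)=X$ and the optimal decoder is $h(\hat Y)=\tfrac12\hat Y$, exactly as claimed. To be thorough, I would remark that these linear mappings satisfy the first necessary condition by Lemma~\ref{useful} and the second by the verified matching condition, so by Theorem~\ref{pptheo} and the essential uniqueness from Theorem~\ref{theo4} (Corollary 1), they are indeed the optimal mappings whenever $f_X=f_Z$.

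There is no serious technical obstacle here; the statement is essentially a corollary of the matching-condition theorem. The only point that requires a small amount of care is the appeal to uniqueness of characteristic functions to pass from $F_X=F_Z$ to $f_X=f_Z$ a.e., and the observation that the candidate $(k_e,k_d)=(1,1/2)$ obtained from (\ref{Gauss}) is forced upon us by the power constraint and the linear MMSE formula rather than by any Gaussian assumption, so it applies verbatim in the present non-Gaussian setting.
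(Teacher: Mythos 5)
Your proposal is correct and follows essentially the same route as the paper: specialize the matching condition to $\gamma=\alpha=1$, conclude $F_X=F_Z$, and invoke uniqueness of characteristic functions to get $f_X=f_Z$ \emph{a.e.} You additionally spell out the computation $k_e=1$, $k_d=1/2$ from (\ref{Gauss}), which the paper leaves implicit but which is part of the theorem statement, so your write-up is if anything slightly more complete.
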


\begin{proof} It is straightforward to see from (\ref{matching}) that, at $\gamma=1$, the characteristic functions must be identical. Since the characteristic function uniquely determines the distribution \cite{billingsley2008probability}, $f_X(x)=f_Z(x)$, $ a.e.$.
\end{proof}

\begin{remark}
Note that Theorem \ref{match} holds irrespective of the source (and channel) density, which demonstrates the departure from the well known example of scalar Gaussian source and channel. 
\end{remark}

Next, we investigate the asymptotic behavior of optimal encoding and decoding functions at low and high CSNR. The results of our asymptotic analysis are of practical importance since they justify, under certain conditions, the use of linear mappings without recourse to complexity arguments at asymptotically high or low CSNR regimes. 

\begin{theorem}
In the limit $\gamma \rightarrow  0 $, the optimal encoding and decoding functions are asymptotically linear if the channel is Gaussian, regardless of the source. Similarly, as $\gamma \rightarrow \infty$, the  optimal mappings are asymptotically linear if the source is Gaussian, regardless of the channel.
\label{lintheo}
\end{theorem}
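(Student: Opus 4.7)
The plan is to leverage the matching condition from Theorem~5, namely $F_X(\alpha\omega) = F_Z^{\gamma}(\omega)$ with $\gamma = P_T/\sigma_Z^2$ and $\alpha = \sqrt{P_T/\sigma_X^2}$, which is necessary and sufficient for exact linearity of the optimal mappings. In each of the two asymptotic regimes I would verify that the two sides of this identity coincide up to vanishing higher-order corrections, and then conclude by a continuity/perturbation argument that the optimal encoder and decoder approach their linear counterparts.

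For the low-CSNR claim with Gaussian channel, the right-hand side simplifies exactly to $F_Z^{\gamma}(\omega) = \exp(-P_T\omega^2/2)$. For the left-hand side I would Taylor-expand the source characteristic function at the origin, using $F_X(0)=1$, $F_X'(0)=0$, $F_X''(0)=-\sigma_X^2$, together with $\alpha^2\sigma_X^2 = P_T$, to obtain $F_X(\alpha\omega) = 1 - P_T\omega^2/2 + o(P_T\omega^2)$ as $P_T \to 0$. The right-hand side expands identically as $1 - P_T\omega^2/2 + O(P_T^2\omega^4)$. Hence the discrepancy is $o(P_T\omega^2)$, which vanishes strictly faster than the leading nontrivial contribution $O(P_T\omega^2)$, so the matching condition is satisfied to leading order as $\gamma \to 0$.

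For the high-CSNR claim with Gaussian source, the left-hand side is exactly $F_X(\alpha\omega) = \exp(-P_T\omega^2/2)$. For the right-hand side I would expand $\log F_Z(\omega) = -\sigma_Z^2\omega^2/2 + o(\omega^2)$ (valid under zero mean and finite variance of $Z$), which yields $F_Z^{\gamma}(\omega) = \exp\!\bigl(-P_T\omega^2/2 + \gamma \cdot o(\omega^2)\bigr)$. The apparent difficulty is that $\gamma \to \infty$, but both sides decay rapidly outside the window $|\omega| \lesssim 1/\sqrt{P_T}$, so matching is only required on that window. Rescaling $\omega = u/\sqrt{P_T}$ for fixed $u$ gives $\gamma \cdot o(\omega^2) = (P_T/\sigma_Z^2)\cdot o(1/P_T) = o(1)$, whence $F_Z^{\gamma}(u/\sqrt{P_T}) \to e^{-u^2/2} = F_X(\alpha u/\sqrt{P_T})$ and the matching condition holds in the limit.

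The hard part, in both regimes, is converting the asymptotic satisfaction of the matching condition into the actual statement that the optimal mappings are close to linear, since Theorem~5 is stated as an exact equivalence. The natural vehicle is Theorem~4: because $J_r$ is convex in $f_Y$ and the minimizer is essentially unique, the optimal $f_Y$ depends continuously on the source/channel distributions and on $\gamma$, so parameters that nearly meet the matching condition must yield nearly-linear optimal mappings. A more hands-on alternative, which avoids invoking abstract continuity, is to substitute the linear pair~(\ref{Gauss}) into the second optimality condition~(\ref{iki}) and bound the residual directly with the same Taylor expansions; since Lemma~\ref{useful} certifies the first optimality condition for the linear pair regardless of the distributions, an $o(1)$ residual in the second condition certifies that the linear pair asymptotically satisfies both necessary conditions, which together with the convexity of Theorem~4 yields asymptotic optimality.
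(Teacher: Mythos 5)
Your argument is essentially the paper's: the paper proves this theorem by applying the central limit theorem to the matching condition $F_X(\alpha\omega)=F_Z^{\gamma}(\omega)$, and your Taylor expansions of the characteristic functions (together with the rescaling $\omega=u/\sqrt{P_T}$ in the high-CSNR regime) are precisely the characteristic-function proof of that CLT step written out by hand. Your closing paragraph flags a real subtlety --- passing from asymptotic satisfaction of the exact matching condition to asymptotic linearity of the optimal mappings --- which the paper's one-line proof does not address, so your sketch of a continuity/residual argument via Theorem~4 and Lemma~\ref{useful} is, if anything, more careful than the original.
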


\begin{proof} The proof follows from applying the central limit theorem \cite{billingsley2008probability} to the matching condition (\ref{matching}). The central limit theorem states that as $\gamma \rightarrow  \infty $, for any finite variance noise $Z$, the characteristic function of the matching source $F_X(\omega)=F_Z^{\gamma}(\omega/k_e)$  converges to the Gaussian characteristic function. Hence, at asymptotically high CSNR, any noise distribution is matched by the Gaussian source. Similarly, as  $\gamma \rightarrow  0 $ and for any $F_X(k_e \omega)$, $F_X^{\frac{1}{\gamma} }(k_e \omega)$  converges  to the Gaussian characteristic function and hence the optimal mappings are asymptotically linear if the channel is Gaussian.
\end{proof}

Let us next consider a setup with given source and noise variables and a power which may be scaled to vary the CSNR, $\gamma$. Can the optimal mappings be linear at multiple values of $\gamma$ ? This question is motivated by the practical setting where $\gamma$ is not known in advance or may vary (e.g., in the design stage of a communication system). It is well-known that the Gaussian source-Gaussian noise pair makes the optimal mappings linear at all $\gamma$ levels. Below, we show that this is the only source-channel pair for which the optimal mappings are linear at more than one CSNR value.

\begin{theorem}
\label{gauss}
Given source and channel variables, let power $P_T$ be scaled to vary CSNR, $\gamma$. The optimal encoding and decoding mappings are linear at two different power levels $P_1$ and $P_2$ {\em if and only if} source and noise are both Gaussian. 
\end{theorem}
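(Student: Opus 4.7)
The plan is as follows. Sufficiency is immediate: when $X\sim\mathcal N(0,\sigma_X^2)$ and $Z\sim\mathcal N(0,\sigma_Z^2)$, the classical result recalled in Section V.A delivers linear optimal mappings at every power level, hence in particular at $P_1$ and $P_2$. The work lies in necessity, which I would establish by applying the matching condition $F_X(\alpha\omega)=F_Z^{\gamma}(\omega)$ of the preceding theorem at \emph{both} power levels and exploiting the different ways $\alpha=\sqrt{P_T/\sigma_X^2}$ and $\gamma=P_T/\sigma_Z^2$ scale with $P_T$.

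Setting $\alpha_i=\sqrt{P_i/\sigma_X^2}$ and $\gamma_i=P_i/\sigma_Z^2$ for $i=1,2$, the central algebraic observation is
\[
\frac{\gamma_1}{\gamma_2}=\left(\frac{\alpha_1}{\alpha_2}\right)^{\!2}\equiv\beta^2,
\]
with $\beta\in(0,\infty)\setminus\{1\}$ since $P_1\neq P_2$. Defining $\phi_X(\omega):=\log F_X(\omega)$ via the principal branch on a neighborhood of the origin (where $F_X$ is continuous and $F_X(0)=1$), the two matching equations read $\phi_X(\alpha_i\omega)=\gamma_i\phi_Z(\omega)$. Eliminating $\phi_Z$ and substituting $u=\alpha_2\omega$ gives the single-variable functional equation
\[
\phi_X(\beta u)=\beta^2\,\phi_X(u)
\]
on a neighborhood of $u=0$.

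I would then solve this equation using only the zero-mean/finite-variance hypotheses on $X$. Because $\mathbb E|X|^2<\infty$ and $\mathbb EX=0$, $F_X$ is twice differentiable at the origin with Taylor expansion $\phi_X(v)=-\sigma_X^2 v^2/2+o(v^2)$ as $v\to 0$. Assuming without loss of generality $\beta<1$ (otherwise swap $P_1$ and $P_2$), iterating the functional equation yields $\phi_X(u)=\beta^{-2n}\phi_X(\beta^n u)$ for every $n\geq 1$. Fixing $u$ small and letting $n\to\infty$, $\beta^n u\to 0$ and
\[
\phi_X(u)=\lim_{n\to\infty}\beta^{-2n}\!\left[-\tfrac{\sigma_X^2}{2}(\beta^n u)^2+o\!\left((\beta^n u)^2\right)\right]=-\frac{\sigma_X^2}{2}u^2.
\]
Hence $F_X(\omega)=\exp(-\sigma_X^2\omega^2/2)$ in a neighborhood of the origin, and back-substituting into either matching equation (using $\alpha_1^2/\gamma_1=\sigma_Z^2/\sigma_X^2$) gives $F_Z(\omega)=\exp(-\sigma_Z^2\omega^2/2)$ there as well.

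Finally, I would upgrade local agreement with the Gaussian characteristic function to the global conclusion that $X$ and $Z$ are themselves Gaussian. Since $F_X$ coincides with a $C^\infty$ function on an open interval around $0$, it is itself $C^\infty$ at $0$; this forces $X$ to possess moments of every order, which must equal those of $\mathcal N(0,\sigma_X^2)$. Because the Gaussian distribution is determinate in the Hamburger moment problem (its moments satisfy Carleman's condition), $X\sim\mathcal N(0,\sigma_X^2)$, and the identical reasoning applied to $Z$ closes the argument. I expect this last local-to-global passage to require the most care; everything before it is a short algebraic consequence of the matching theorem combined with the two-term Taylor expansion guaranteed by finite variance.
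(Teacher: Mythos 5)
Your proposal is correct, and in the necessity direction it takes a genuinely different route from the paper. Both arguments start from the matching condition $F_X(\alpha\omega)=F_Z^{\gamma}(\omega)$ applied at the two power levels, but they diverge immediately afterwards. The paper eliminates $F_X$ to get $F_Z^{\gamma_1}(\omega)=F_Z^{\gamma_2}(\alpha\omega)$, uses the sign ambiguity $\pm\alpha$ (both give the same power) to argue $F_Z$ is real, and then represents $\log F_Z$ as a power series $\sum_i k_i\omega^i$ via Weierstrass approximation, matching coefficients in $\alpha^2=\sum k_i(\alpha\omega)^i/\sum k_i\omega^i$ to force $k_i=0$ for $i\neq 2$. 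You instead eliminate $F_Z$ to obtain the self-similarity relation $\phi_X(\beta u)=\beta^2\phi_X(u)$ near the origin, iterate it, and use only the two-term Taylor expansion $\phi_X(v)=-\sigma_X^2v^2/2+o(v^2)$ guaranteed by zero mean and finite variance to pin down $\phi_X$ exactly on a neighborhood of zero; realness and the quadratic form come out automatically rather than being argued separately, and no global series representation of the characteristic function is assumed. The price you pay is the additional local-to-global step: agreement with the Gaussian characteristic function on an interval does not by itself identify the law, and your passage through existence of all moments plus Carleman determinacy of the Gaussian moment problem is exactly the right way to close that gap (the paper implicitly works globally and does not confront it). On balance your argument uses weaker analytic hypotheses and is the more careful of the two, at the cost of invoking the moment-determinacy machinery at the end.
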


\begin{remark}
This theorem also holds for the setting  where  source or  noise variables are scaled to change CSNR for a given power $P_T$. 
\end{remark}

\begin{proof}
Let  $\gamma_1$ and $\gamma_2$ denote two CSNR levels, $g_1(X)=k_{e_1}X$ and $g_2(X)= k_{e_2}X$ denote encoding mappings. Let the power be scaled by $\alpha^2$ ($ \alpha  \in \mathbb R^+$), i.e., $P_2=\alpha^2 P_1$ which yields 
\begin{equation}
\label{snreq}
\gamma_2=\alpha^2 \gamma_1, \, \,
k_{e_2}=\alpha k_{e_1}.
\end{equation}
Using (\ref{matching}), we have
\begin{equation}
\label{simple}
F_{X}(k_{e_1} \omega)=F_{Z}^{\gamma_1}(\omega), F_{X}( k_{e_2} \omega)= F_{Z}^{\gamma_2}(\omega ).
\end{equation}
Hence,
\begin{equation}
\label{simple2}
F_{Z}^{\gamma_1}(\omega) = F_{Z}^{\gamma_2}( \alpha \omega).
\end{equation}
\noindent Taking the logarithm on both sides of (\ref{simple2}), applying (\ref{snreq}) and rearranging terms, we obtain
\begin{equation}
\label{eq_up}
\alpha^2=\frac{\log F_{Z}( \alpha \omega)}{\log F_{Z}(\omega)}.
\end{equation}
\noindent  Note that (\ref {eq_up}) should be satisfied for both $\alpha$ and $-\alpha$ since they yield the same $\gamma$.  Hence, $ F_{Z}( \alpha \omega)= F_{Z}( - \alpha \omega)$ for all $\alpha \in \mathbb R$, which implies $ F_{Z}(\omega)= F_{Z}( - \omega)$, {\it a.e.} in $\omega$. Using the fact that the characteristic function is conjugate symmetric (i.e., $F_{Z}(-\omega)=F_{Z}^*(\omega)$), we get $F_{Z}(\omega) \in \mathbb R,$ {\it a.e.} in $\omega$.
As $\log F_{Z}(\omega)$ is a function from $ \mathbb R  \rightarrow \mathbb C$, the Weierstrass theorem \cite{dudley2002real} guarantees that we can uniformly approximate $\log F_{Z}(\omega)$ arbitrarily closely by a polynomial 
 $ \sum \limits_{i=0}^{\infty}  k_i \omega^i $, where $k_i \in \mathbb C$. Hence, by (\ref{eq_up}) we obtain:
\begin{equation}
\label{above_eq_last}
\alpha^2=\frac{\sum \limits_{i=0}^{\infty}  k_i (\omega \alpha)^i}{\sum \limits_{i=0}^{\infty}  k_i \omega^i}
\end{equation}
{\it a.e.} in $\omega$ only if all coefficients $k_i$  vanish, except for $k_2$, i.e., $\log F_{Z}(\omega)=k_2\omega^2$, or $\log F_{Z}(\omega)=0$ {\it a.e.} in $\omega$ (the solution $\alpha=1$ is of no interest). The latter is not a characteristic function, and the former is the Gaussian characteristic function, $F_{Z}(\omega)=e^{k_2\omega^2}$, where we use the established fact that $F_{Z}(\omega) \in \mathbb R$.
Since a characteristic function determines the distribution uniquely, the Gaussian source and noise must be the only allowable pair.
\end{proof}

  \begin{figure*} 
  \centering
       \includegraphics[scale=0.60]{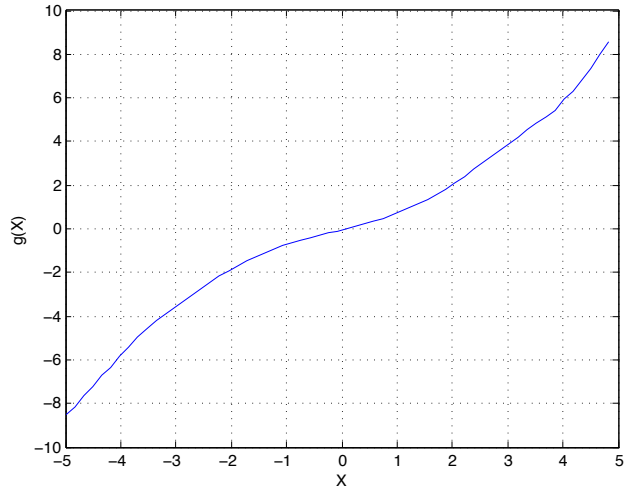}   
     \caption{Encoder mapping for bi-modal GMM source, Gaussian channel, modes at $3$ and $-3$ as in (\ref{density}). }
       \label{gmmtwo}
\end{figure*} 
\subsection{On the Existence of Matching Source and Channel}
Having discovered the necessary and sufficient condition as answer to the question of {\it when optimal zero-delay encoding and decoding mappings are linear}, we next focus on the question: {\it when can we find a matching source (or noise)  for a given noise (source)?}  Given  a valid characteristic function  $F_Z(\omega)$,  and for some $\gamma \in \mathbb R^+$, the function $F_Z^{\gamma}(\omega)$ may or may not be a valid characteristic function, which determines the existence of a matching source. For example, matching is guaranteed for integer $\gamma$ and it is also guaranteed for infinitely divisible $Z$. Conditions on $\gamma$ and  $F_Z(\omega)$ for $F_Z^{\gamma}(\omega)$ to be a valid characteristic function were studied detail in \cite{emrah_estimation}, to which we refer for brevity and to avoid repetition.

\section{Numerical Results}

We implement the algorithm described in Section IV by numerically calculating the derived integrals.  For that purpose,  we sample the source and noise distributions on a uniform grid. We also impose bounded support ($-5\sigma$ to  $+5\sigma$) i.e., neglected tails of infinite support distributions in the examples.

\subsection {Scalar Mappings ($m=1, k=1$), Gaussian Mixture Source  and Gaussian Channel}
We consider a Gaussian mixture source with distribution
\begin{equation}
\label{density}
 f_x(x)=\frac{1}{2\sqrt{2\pi}}\left \{e^{\frac{-(x-3)^2}{2}} +e^{\frac{-(x+3)^2}{2}}\right\} 
 \end{equation}
and unit variance Gaussian noise. The encoder and decoder mappings for this source-channel setting are given in Figure \ref{gmmtwo}. As intuitively expected, since the two modes of the Gaussian mixture are well separated, each mode locally behaves as Gaussian. Hence the curve can be approximated as piece-wise linear and deviates significantly from a truly linear mapping. This illustrates the importance of nonlinear mappings for general distributions that diverge from the pure Gaussian.

\subsection {A Numerical Comparison with Vector Quantizer Based Design }
In the following, we compare the proposed approach to the power constrained channel optimized vector quantization (PQCOVQ) based approach which first discretizes the problem, numerically solves the discrete problem and then linearly interpolates between the selected points. The main difference between our approach and PQCOVQ based approaches is that we derive the necessary conditions of optimality in the original, ``analog" domain without any discretization. This allows not only a theoretical analysis of the problem but also enables a different numerical method which iteratively imposes the optimality conditions of the ``original problem". However, PQCOVQ will arguably approximate the solution at asymptotically high sampling (discretization) resolution.  

  \begin{figure*}
\centering    
        \includegraphics[scale=0.5]{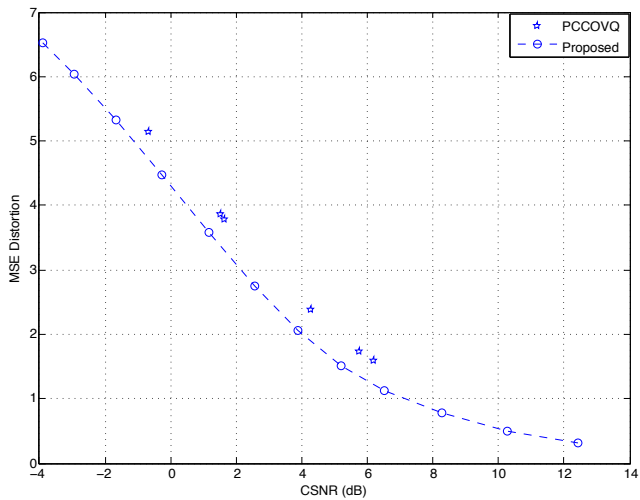}             
\caption{Comparative results: PQCOVQ vs. the proposed method.}
\label{pcc}
\end{figure*} 

 \begin{figure*}
\centering 
\subfigure[CSNR=-5.70dB]{
\includegraphics[scale=0.38]{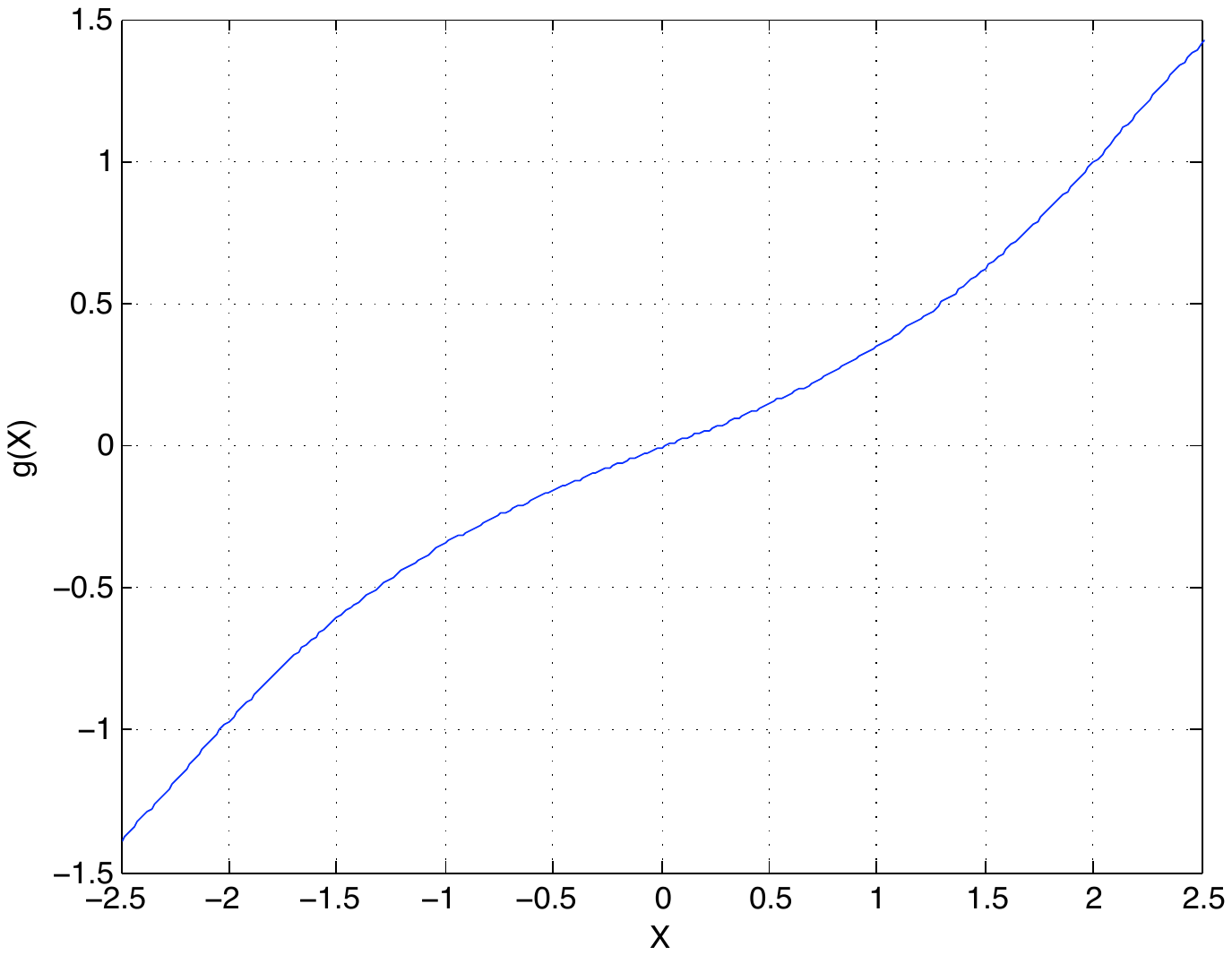}
}
\subfigure[CSNR=-1.69dB]{
\includegraphics[scale=0.38]{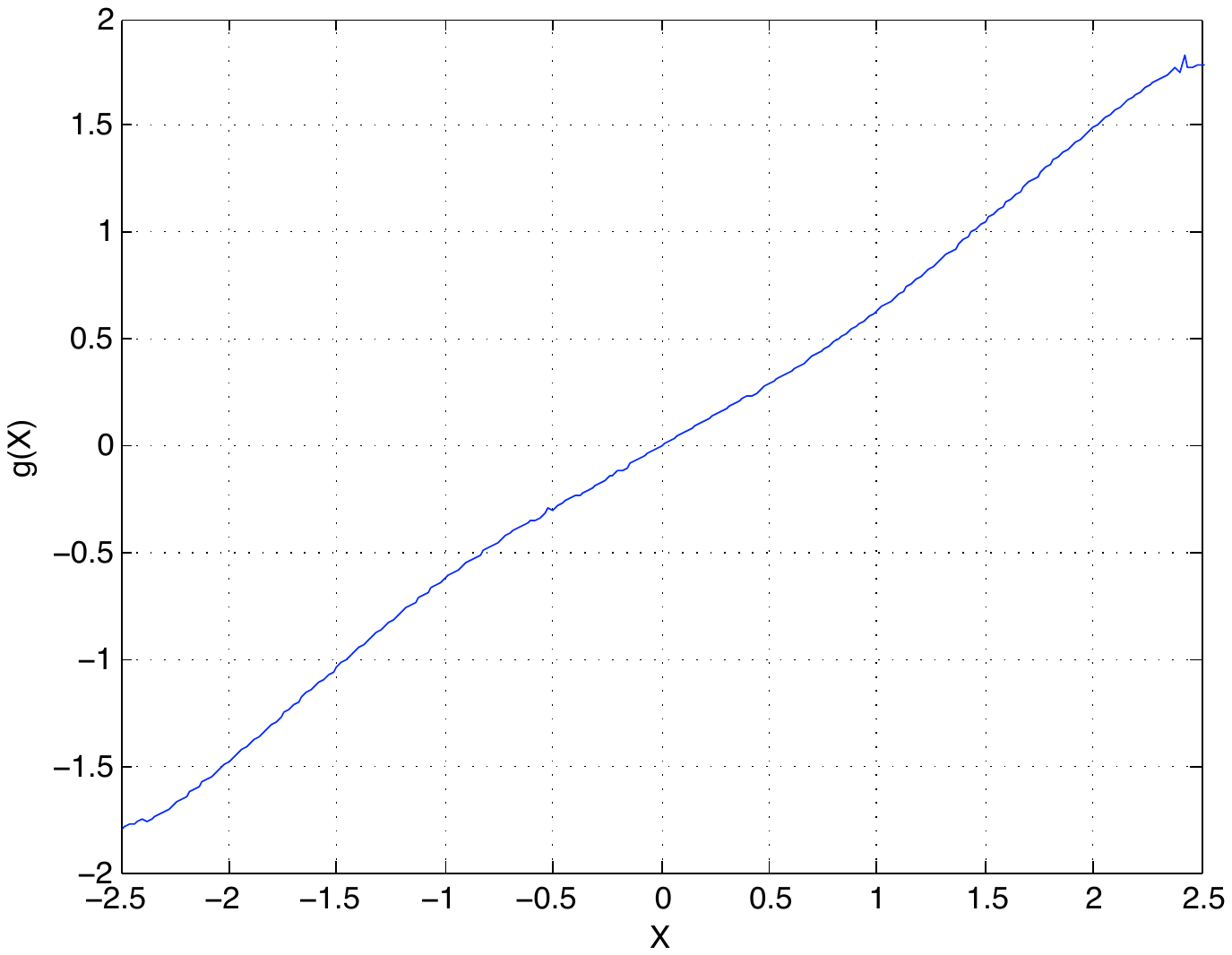}
}
\subfigure[CSNR=5.69dB]{
\includegraphics[scale=0.38]{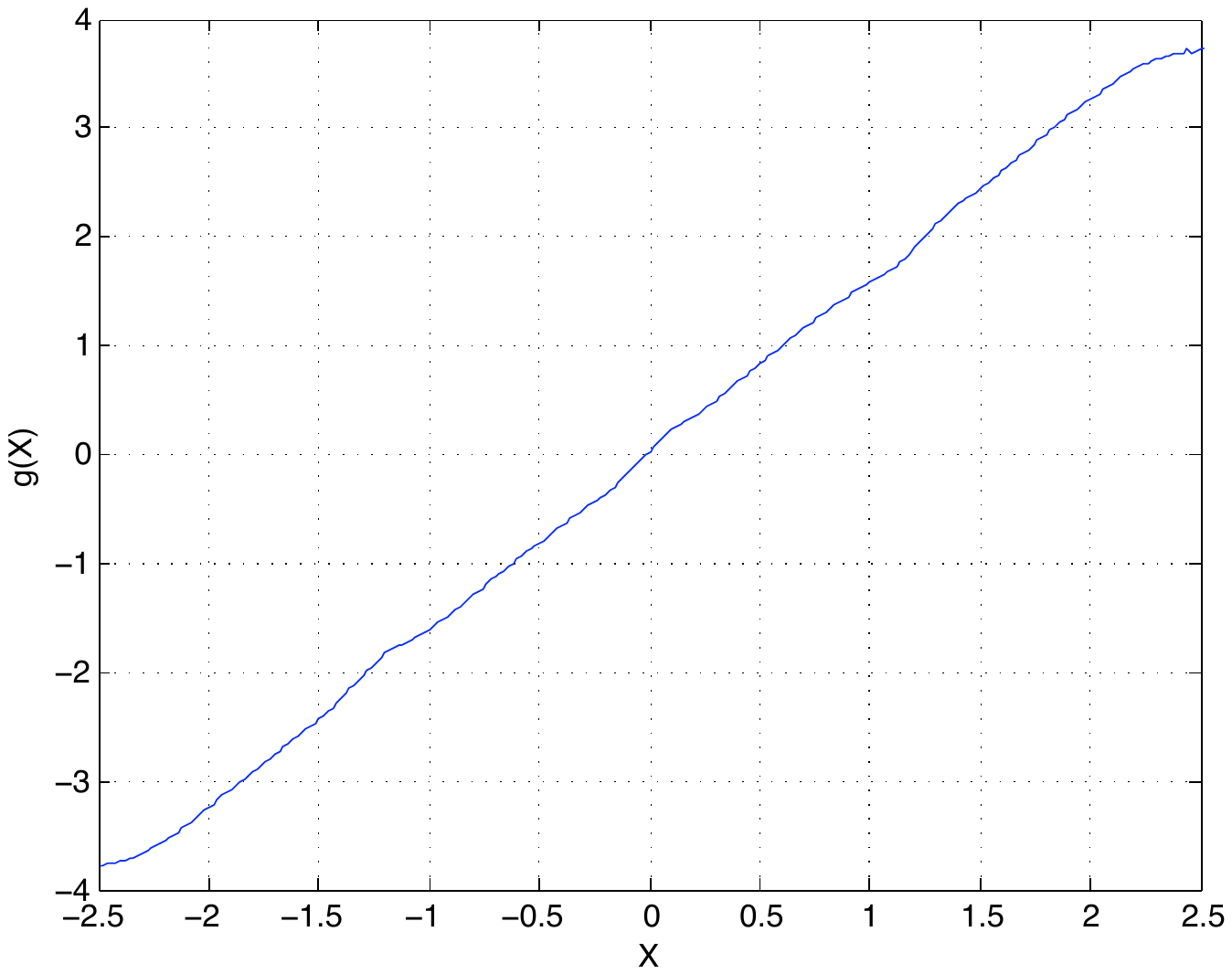}
}
\caption{This figure shows the optimal encoder at various CSNR values when $X \sim \mathcal N(0,1)$ and $Z$ is distributed uniformly on the interval $[-1,1]$ and CSNR is varied by changing power, $P$. Observe that the optimal encoder converges to linear as CSNR increases. }
\end{figure*}

\begin{figure*}
\centering 
\subfigure[CSNR=-5.70dB]{
\includegraphics[scale=0.38]{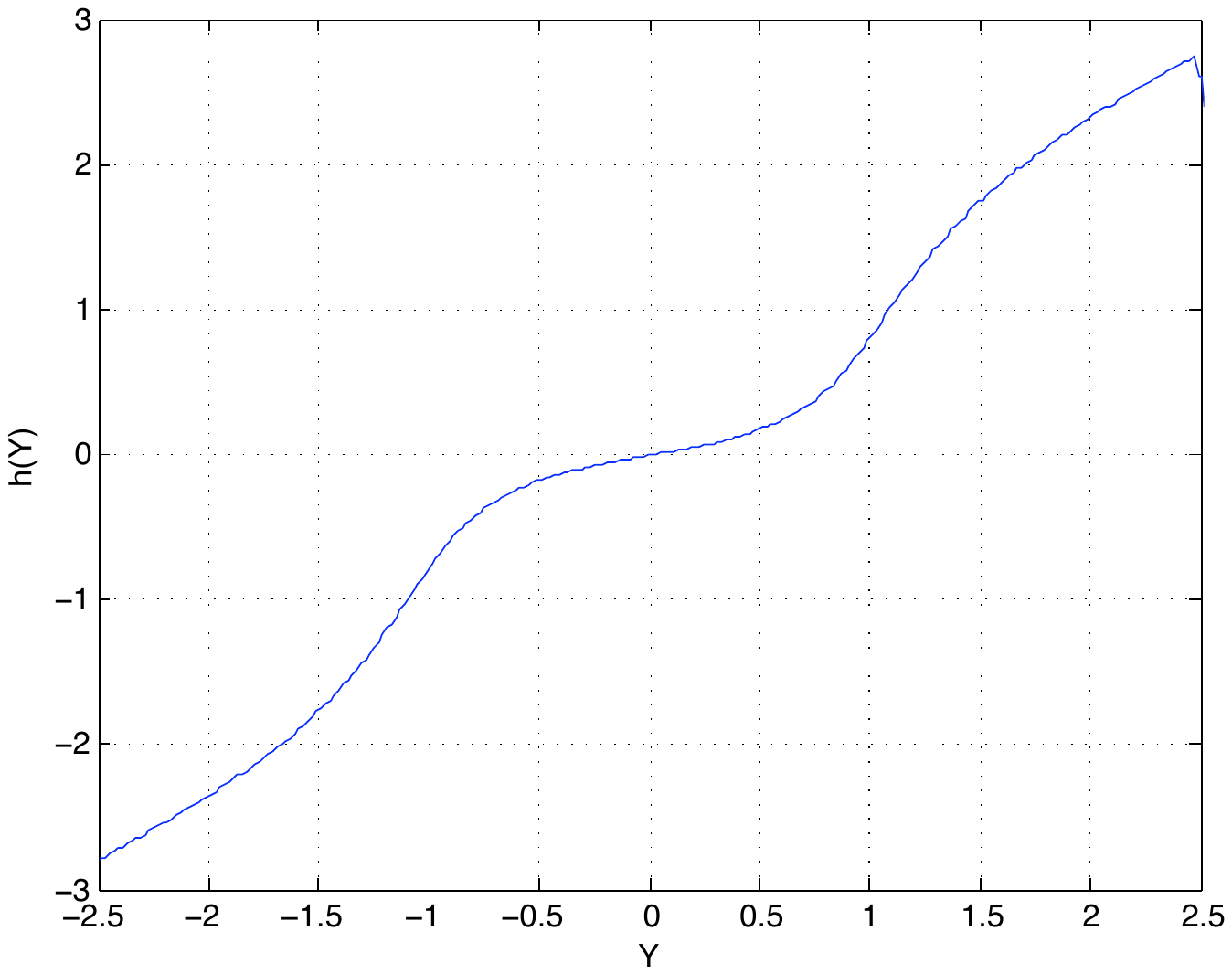}
}
\subfigure[CSNR=-1.69dB]{
\includegraphics[scale=0.38]{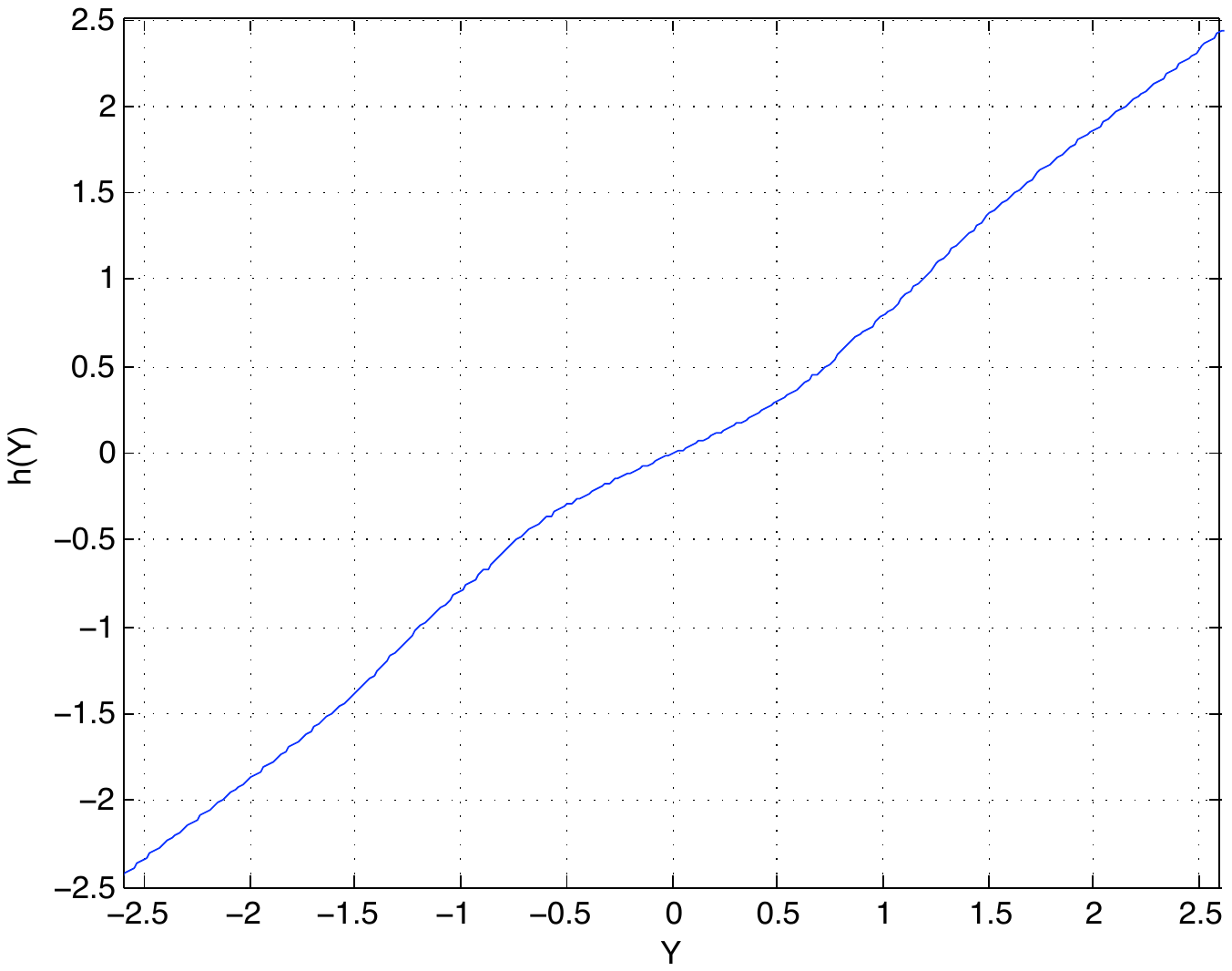}
}
\subfigure[CSNR=5.69dB]{
\includegraphics[scale=0.38]{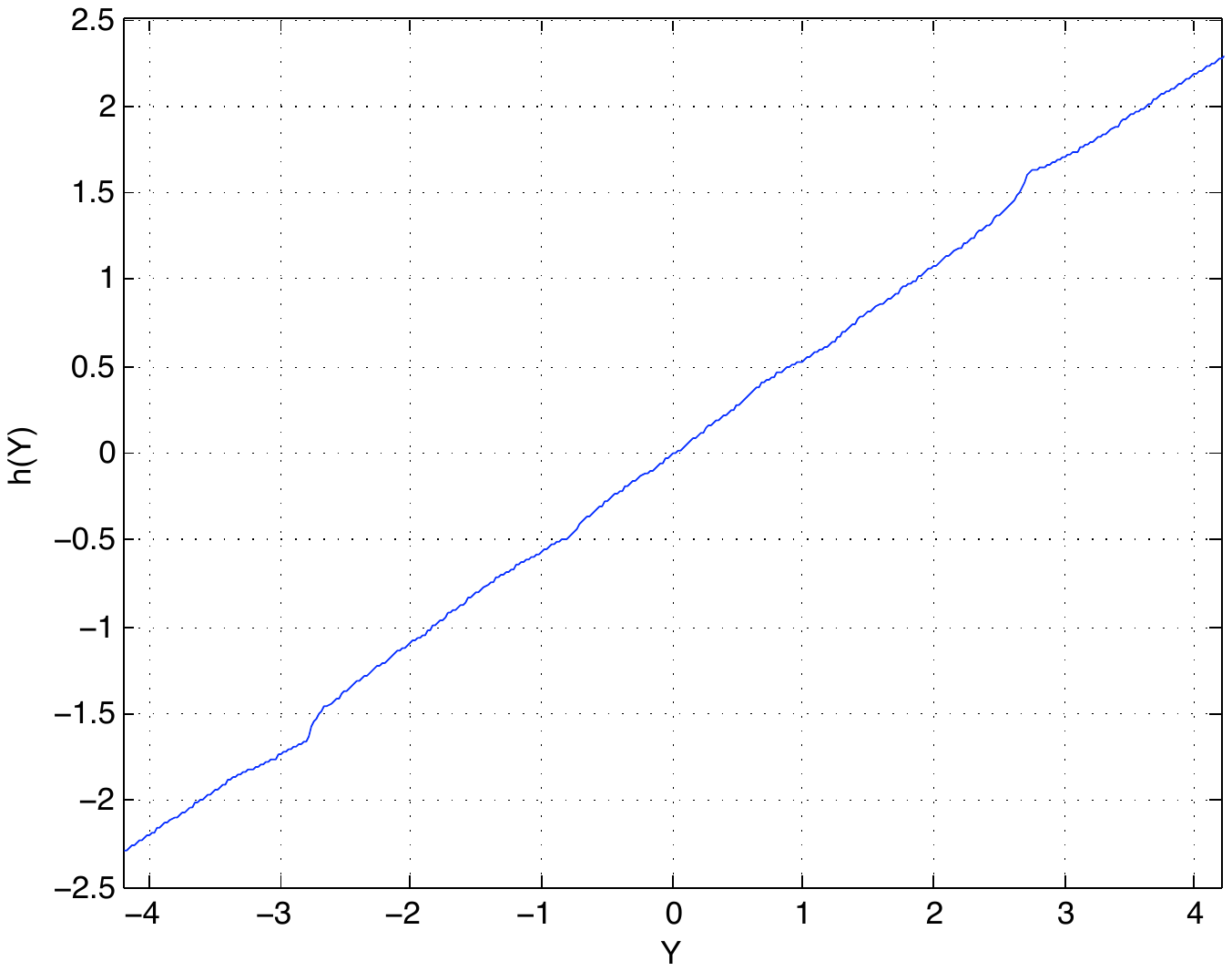}
}
\caption{This figure shows the optimal decoder (estimator) at various CSNR (in dB) values. Observe that the optimal decoder, similar to the optimal encoder in Figure 5, converges to linear as CSNR increases.}
\end{figure*}

To compare our method to PQCOVQ, we consider our running example of Gaussian mixture source and Gaussian channel. For both methods, we use 10 sampling points for the encoder mapping. The main difference is due to two facts: i) The proposed method is based on the necessary condition derived in the ``original" analog domain, and discretization is merely used to perform the ultimate numerical operations. On the other hand, PQCOVQ defines a ``discretized" version of the problem from the outset, with the implicit assumption that the discretized problem, at sufficiently high resolution, approximates well the original problem. Hence, although both methods eventually optimize and then interpolate a discrete set of points, the  proposed algorithm finds the values of these points while accounting for the fact that they will eventually be (linearly) interpolated. PQCOVQ does not account for eventual interpolation and merely solves the discrete problem. ii) Since we consider the problem in its original domain, we naturally use the optimal decoder, namely, conditional expectation. The PQCOVQ method uses the standard maximum likelihood method for decoding, see e.g.  \cite{karlsson2010optimized}.

The numerical comparisons are shown in Figure \ref{pcc}. As expected, the proposed method outperforms PQCOVQ for the entire range of CNSRs in this resolution constrained setting of 10 samples. We note  that  the performance difference diminishes at higher sampling resolution. The purpose of this comparison is to demonstrate the conceptual difference between these two approaches at finite resolution while acknowledging that the proposed method does not provide gains at asymptotically high resolution.

\begin{figure}\centering            \includegraphics[scale=0.60]
{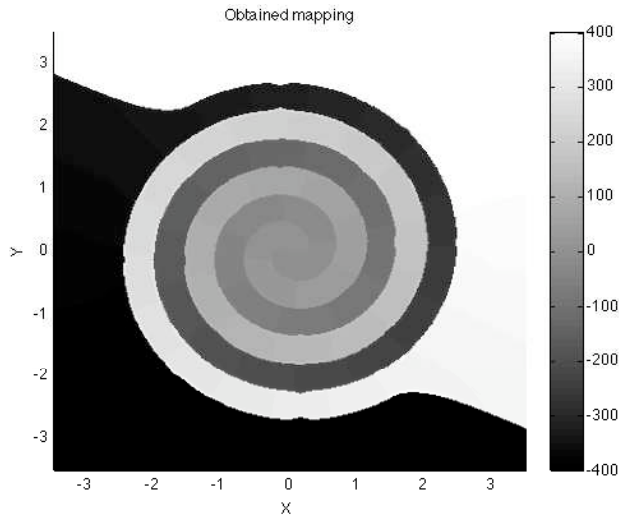}\caption{Encoder 2:1 mapping for unit variance Gaussian source and channel, at CSNR=40dB, SNR=19.41dB. The axes show the two dimensional input ($\boldsymbol x$) and the function value ($\boldsymbol g(\boldsymbol x)$) is reflected in the intensity level. } 
\label{fig21}
\end{figure}

\subsection {A Numerical Example for Theorem 7}


Let us consider a numerical example that illustrates the  findings in Theorem 7. Consider a setting where $X$ is Gaussian with unit variance and $Z$ is Gaussian with unit variance, i.e., $Z \sim \mathcal N(0,1)$. We change $\gamma$ (CSNR) by varying allowed power $P_T$, and observe how the optimal mappings behave for different $\gamma$.  We numerically calculated the optimal mappings by discretizing the integrals  on a uniform grid, with a step size $\Delta= 0.01$, i.e., to obtain the numerical results, we approximated the integrals as Riemann sums. Figures 5 and 6 respectively show how the optimal encoder and decoder mappings converge to linear as CSNR increases. Note that at $\gamma=-5.70$, optimal mappings are both highly nonlinear while at $\gamma=5.69$, they practically converge to linear, as theoretically anticipated from Theorem \ref{lintheo}.

\subsection{ ($m=2, k=1$)  Gaussian source-channel mapping}
In this section, we present a bandwidth compression example with 2:1 mappings for Gaussian vector source of size two (source samples are assumed to be independent and identically distributed with unit variance) and scalar Gaussian channel, to demonstrate the effectiveness of our algorithm in differing source and channel dimensions. We compare the proposed mapping to the asymptotic bound (OPTA) and prior work \cite{hekland}. We also compare the optimal encoder-decoder pair to the setting where only the decoder is optimized and the encoder is fixed. In prior work \cite{chung, ramstad, hekland}, the Archimedian spiral is found to perform well for Gaussian 2:1 mappings, 
and used for encoding and decoding with maximum likelihood criteria. We hence initialize our algorithm with an Archimedian spiral (for the encoder mapping). For details of the Archimedian spiral and its settings, see e.g.\cite{hekland} and references therein. 

\begin{figure}
\centering    
        \includegraphics[scale=0.65]{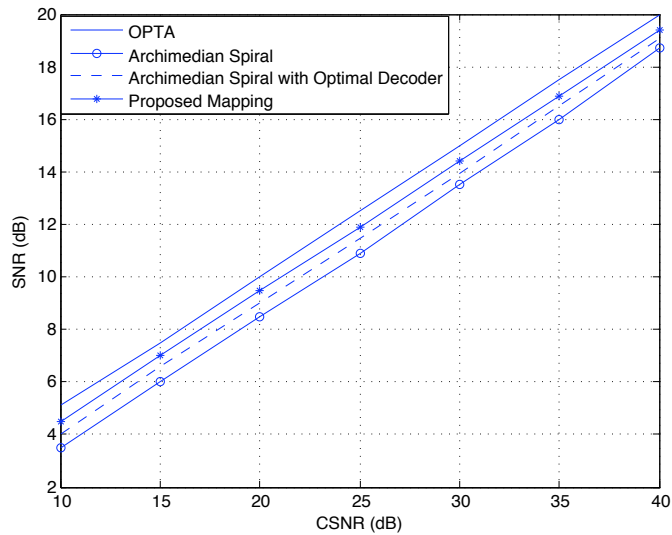}             
\caption{Comparative results for Gaussian source-channel, 2:1 mapping.}\label{res_21}
\end{figure}  

The obtained encoder mapping is shown in Figure \ref{fig21}. While the mapping produced by our algorithm resembles a spiral,  it nevertheless differs from the Archimedian spiral, as will also be evident from the performance results. Note further that the encoding scheme differs from prior work in that we continuously map the source  to the channel signal, where the two dimensional source is mapped to the nearest point on the space filling spiral.  The comparative performance results are shown in Figure \ref{res_21}. The proposed mapping outperforms the Archimedian spiral \cite{hekland} over the entire range of CSNR values. It is notable that the ``intermediate" option of only optimizing  the decoder captures a significant portion of the gains.

\subsection {Source-Channel Coding with Decoder Side Information}

In this section, we demonstrate the use of the proposed algorithm by focusing on the specific scenario of Figure 2. It must be emphasized that, while the algorithm is general and directly applicable to any choice of source and channel dimensions and distributions, for conciseness of the results section we will assume that sources are jointly Gaussian scalars with correlation coefficient $\rho$, where $|\rho|\leq 1$, and are identically distributed as described in Section II.B. We also assume that the noise is scalar and Gaussian.

Figure \ref{example1} presents a sample of encoding mappings obtained by varying the correlation coefficient and CSNR. Interestingly, the analog mapping captures the central characteristic observed in digital Wyner-Ziv mappings, in the sense of many-to-one mappings, where multiple source intervals are mapped to the same channel interval, which will potentially be resolved by the decoder given the side information. Within each bin, there is a mapping function which is approximately linear in this case (scalar Gaussian sources and channel). 
\begin{figure*}
  \centering
  \subfigure[ CSNR=10dB, $\rho=0.97$]{
			\includegraphics[scale=0.30]{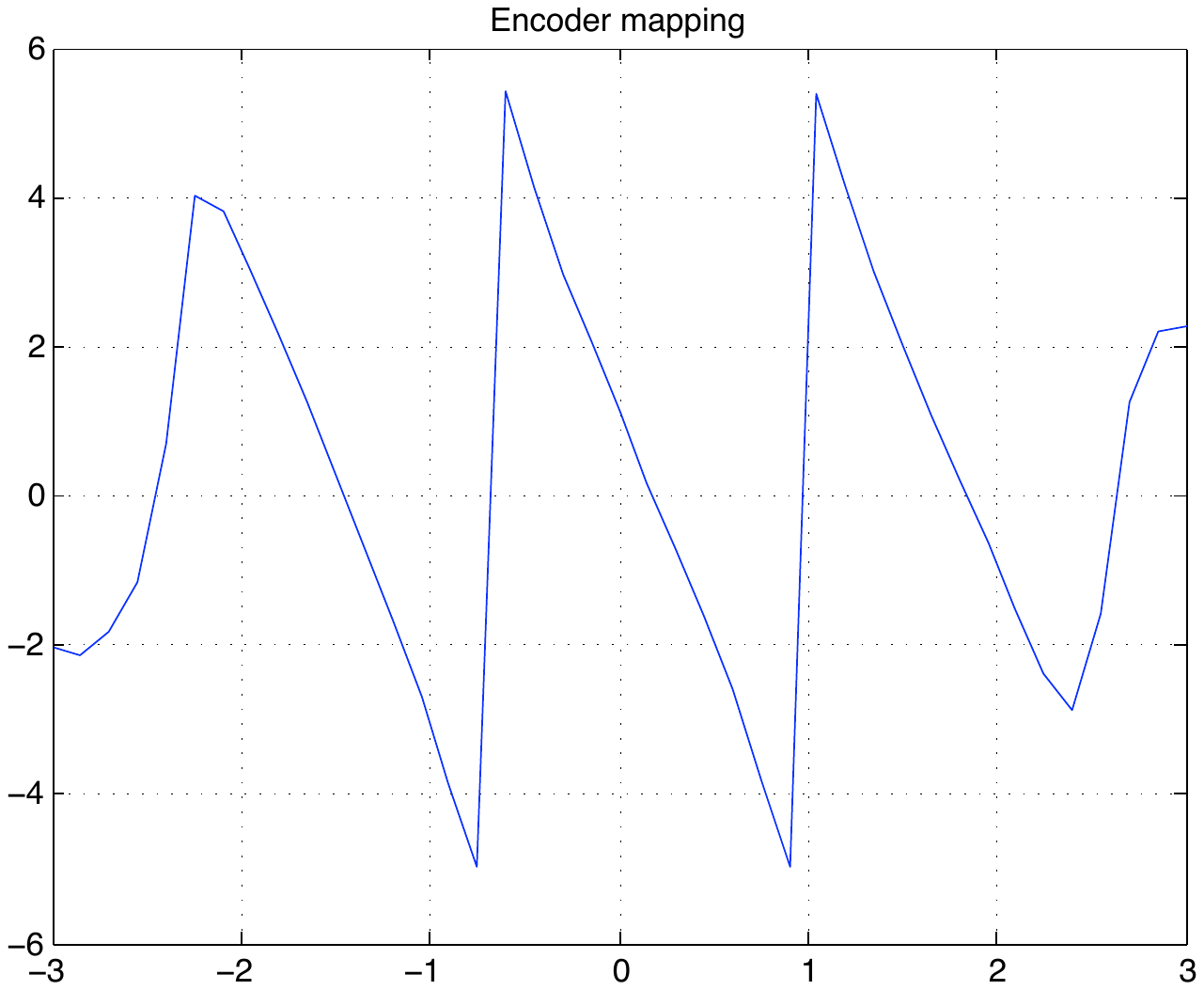}}
	               \subfigure[ CSNR=22dB, $\rho=0.97$]{
			\includegraphics[scale=0.30]{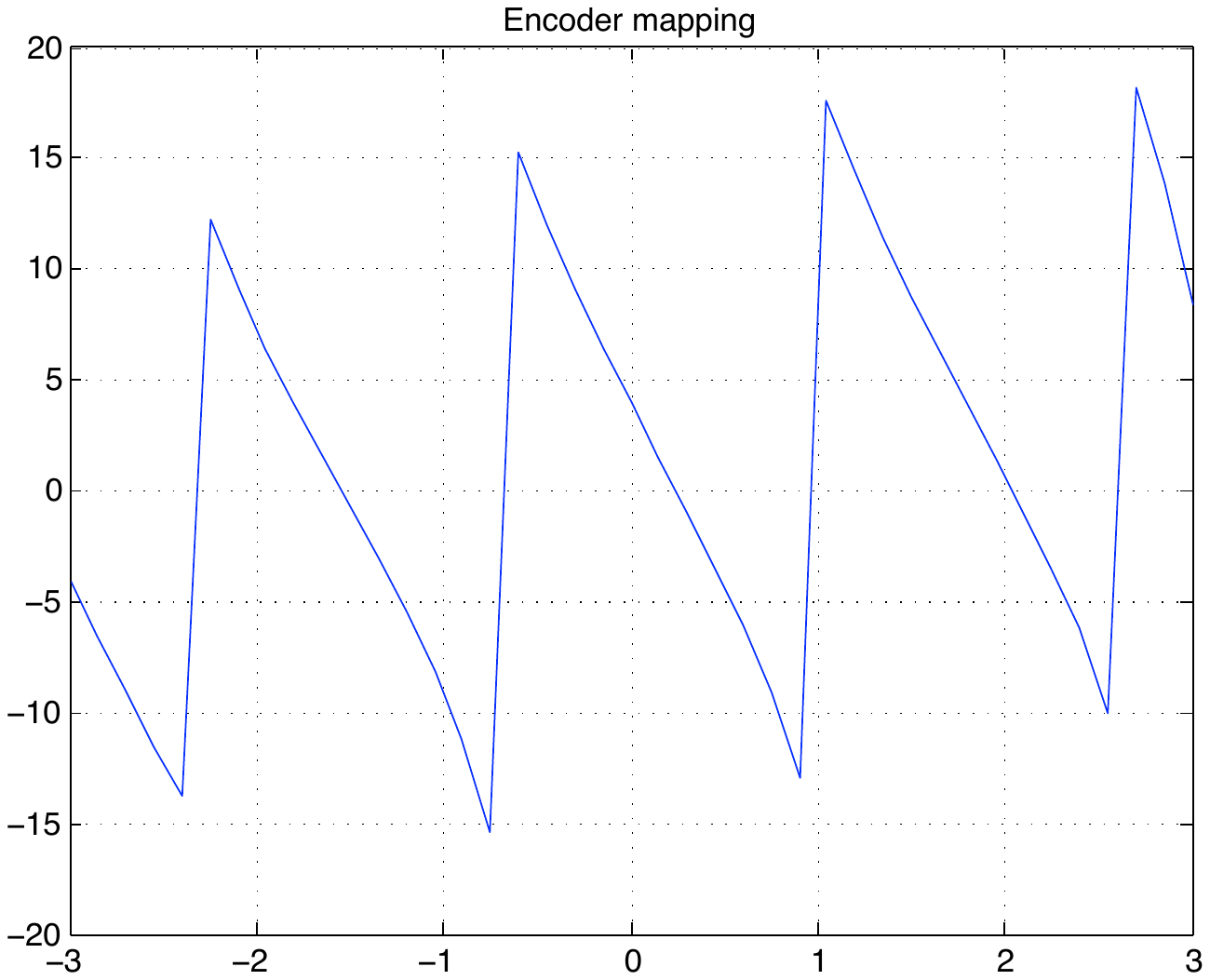}}
			\subfigure[ CSNR=10dB, $\rho=0.9$]{
			\includegraphics[scale=0.30]{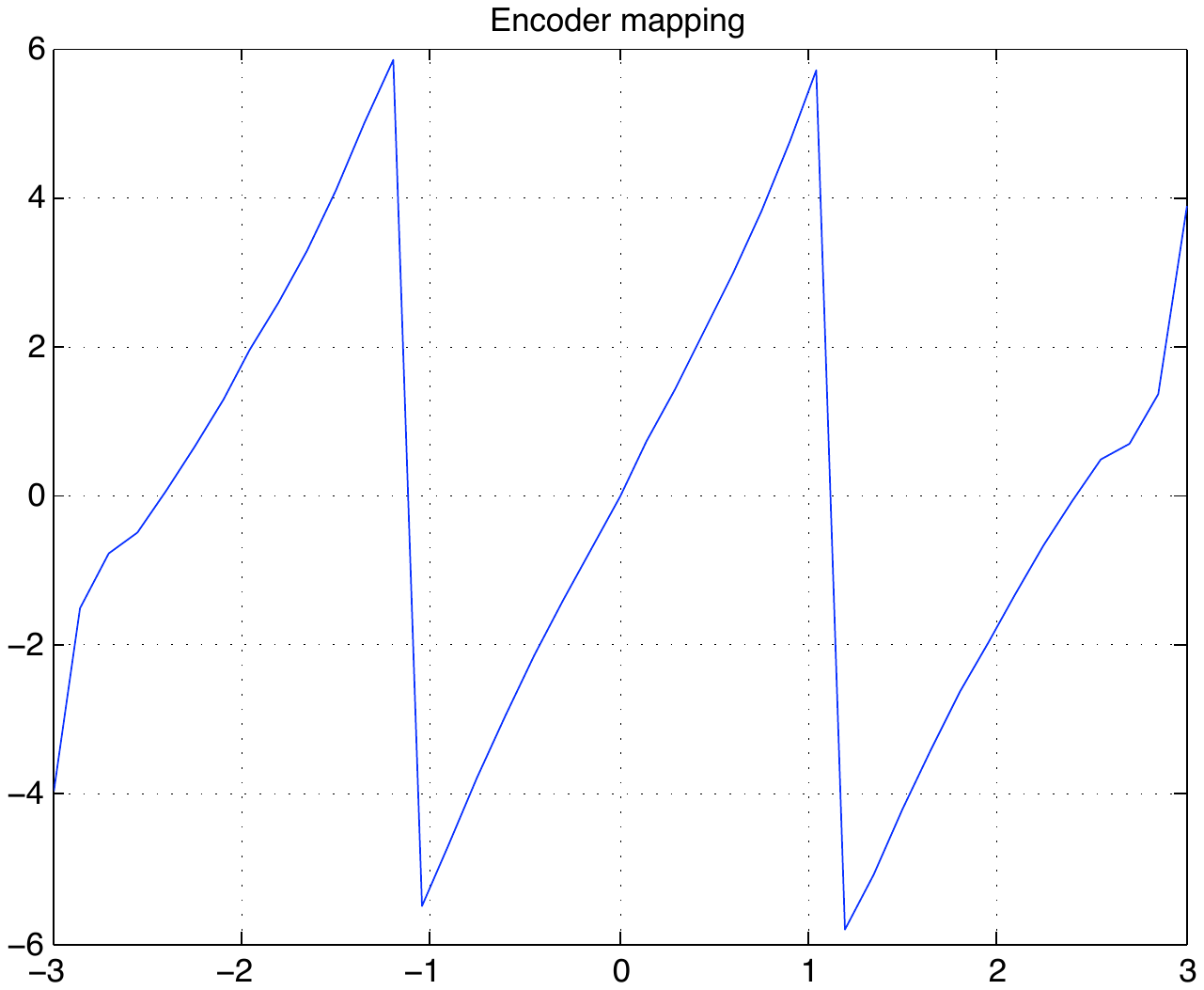}}
			\subfigure[ CSNR=22dB, $\rho=0.9$]{
			\includegraphics[scale=0.30]{ro9csnr10.pdf}}
\caption{Encoder mappings for Gaussian scalar source, channel and side information at different CSNR and correlation levels}
	\label{example1}
\end{figure*}
To see the effect of correlation on the encoding mappings, we note how the mapping changes as we lower the correlation from $\rho=0.97$ to $\rho=0.9$. As intuitively expected, the side information is less reliable and source points that are mapped to the same channel representation grow further apart from each other. 
\begin{figure}
\centering
			\includegraphics[scale=0.66]{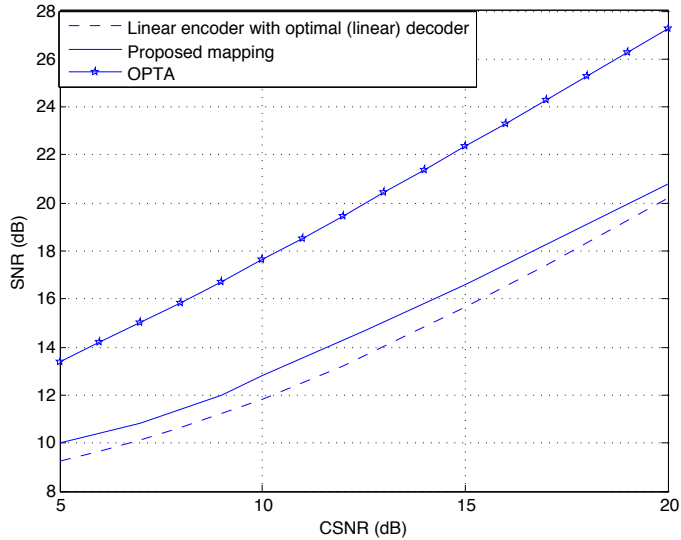}
			 \caption{Comparative results for correlation coefficient $\rho=0.9$, Gaussian scalar source, channel and side information}
\label{sideinfocomp}
\end{figure}
Comparative performance results are shown in Figure \ref{sideinfocomp}. The proposed mapping significantly outperforms linear mapping over the entire range of CSNR values. We note that this characteristic of the encoding mappings was also noted in experiments with  the PCCOVQ approach in \cite{karlsson2010optimized}, and was implemented in \cite{chen2011zero}, for optimizing hybrid (digital+analog) mappings.

\section{Discussion and Future Work}
In this paper, we studied the zero-delay source-channel coding problem. First, we derived the necessary conditions for optimality of the encoding and decoding mappings for a given source-channel system. Based on the necessary conditions, we proposed an iterative algorithm which generates  locally optimal encoder and decoder mappings. Comparative results and example mappings are provided and it is shown that the proposed method improves upon the results of prior work. The algorithm does not guarantee a globally optimal solution. This problem can be largely mitigated by using more powerful optimization, in particular a deterministic annealing approach \cite{da}.
Moreover, we investigated the functional properties of the zero-delay source-channel coding problem. We specialized to the scalar setting and showed that the problem is concave in the source and the channel noise densities and convex in the channel input density. Then, using these functional properties and the necessary conditions of optimality we had derived, we obtained the necessary and sufficient condition for linearity of optimal mappings. We also studied the implications of this matching condition and particularly showed that the optimal mappings converge to linear at asymptotically high CSNR for a Gaussian source, irrespective of the channel density and similarly for a Gaussian channel, at asymptotically low CSNR, irrespective of the source.

The numerical algorithm presented in this paper is feasible for relatively low source and channel dimensions ($m,k$). For high dimensional vector spaces, the numerical approach should be supported by imposing a tractable structure to the mappings, to mitigate the problem of the dimensionality. A set of preliminary results in this direction recently appeared in\cite{akyol2012linear}, where a linear transformation followed by scalar non-linear mappings were utilized for the decoder side information setting. The purely linear solution had been investigated in \cite{goela2012reduced}, where numerical algorithms are proposed to find the optimal bandwidth compression transforms in network settings.

The analysis in this paper, specifically conditions for linearity (and generalizations to other structural forms) of optimal mappings, as well as the numerical approach, can be extended to well known control problems such as the optimal jamming problem \cite{basar1983gaussian} and Witsenhausen's counterexample \cite{witsenhausen1968counterexample,basar2008variations}. 

An interesting question is on the existence of structure in the optimal mappings in some fundamental  scenarios. For instance, in \cite{chen2011zero}, a hybrid digital-analog encoding was employed for the problem of zero-delay source-channel coding with decoder side information, where the source, the side information and the channel noise are all scalar and Gaussian. The reported performance results are very close to the performance of the optimal unconstrained mappings. In contrast, in \cite{karlsson2010optimized}, sawtooth-like structure was assumed and its parameters were optimized as well as PCCOVQ was employed to obtain the non-structured mappings, where non-negligible performance difference between these approaches was reported. Hence, this fundamental question, on whether the optimal zero-delay mappings are structured for the scalar Gaussian side information setting, is currently open. This problem can numerically be approached by employing a powerful non-convex optimization tool, such as deterministic annealing \cite{da}, and this approach is currently under investigation.

\appendices

\section{Optimality Conditions for Coding with Decoder Side Information}

Let the encoder $\boldsymbol g(\cdot)$ be fixed. Then, the optimal decoder is the MMSE estimator  of $\boldsymbol X_1$ given $\boldsymbol X_2$ and $\boldsymbol {\hat Y}$:
\begin{equation}
\boldsymbol h(\boldsymbol{ \hat y},\boldsymbol x_2)= \mathbb E \{\boldsymbol X_1|\boldsymbol {\hat y},\boldsymbol x_2\}.
\end{equation}
Plugging the expressions for expectation, applying Bayes' rule and noting that $f_{\hat Y|X_1}({\boldsymbol { \hat y}, \boldsymbol x_1})= f_{Z}[{\boldsymbol { \hat y}-\boldsymbol g(\boldsymbol x_1)}]$, the optimal decoder can be written, in terms of known quantities, as    
        \begin{equation} \label{decoder_map}
     {\boldsymbol h(\boldsymbol {\hat y},\boldsymbol x_2)}= \dfrac{ \int {\boldsymbol x_1}  \,  f_{X_1,X_2}( {\boldsymbol x_1,\boldsymbol x_2})  \,  f_{Z} [{\boldsymbol {\hat y}-\boldsymbol g(\boldsymbol x_1)}]\, { \mathrm{d}\boldsymbol x_1} } {  \int  \,      f_{X_1,X_2}( {\boldsymbol x_1, \boldsymbol x_2})  \,  f_{Z} [{\boldsymbol {\hat y}-\boldsymbol g(\boldsymbol x_1)}]  \,  \mathrm{d}\boldsymbol x_1}.
      \end{equation}
 
To derive the necessary condition for optimality of $\boldsymbol g(\cdot)$, we consider the distortion functional   
\begin{equation}
D[{\boldsymbol g,\boldsymbol h}] =\mathbb E \{ || { [\boldsymbol X_1- \boldsymbol h(\boldsymbol g(\boldsymbol X_1)+\boldsymbol Z,\boldsymbol X_2)}] ||^2\},
\end{equation}
and construct the Lagrangian cost functional:
\begin{equation}\label{total_cost_SI}
J [{\boldsymbol g,\boldsymbol h}] =D[{\boldsymbol g,\boldsymbol h}]+\lambda P [{\boldsymbol g}] .
  \end{equation}
Now, let us assume the decoder $\boldsymbol h(\cdot)$ is fixed. To obtain necessary conditions we apply the standard method in variational calculus:
\begin{equation}\label {encoder_map}
 \nabla_{\boldsymbol g} J [{\boldsymbol g},\boldsymbol h] (\boldsymbol x_1,\boldsymbol x_2)=0, \, \, \forall \, \boldsymbol x_1, \,\boldsymbol  x_2,
\end{equation} 
where 
 \begin{align}
 \nabla_{\boldsymbol g } J[\boldsymbol g,\boldsymbol h] (\boldsymbol x_1,\boldsymbol x_2)\!= \lambda f_{X_1,X_2}( {\boldsymbol  x_1,\boldsymbol x_2})   {\boldsymbol g(\boldsymbol x_1)} \! -\!\int \!  \boldsymbol h'(\boldsymbol g(\boldsymbol x_1)\!+\!\boldsymbol z,\boldsymbol x_2)     \left [{\boldsymbol x\!-\!\boldsymbol h(\boldsymbol g(\boldsymbol x)\!+\!\boldsymbol z,\boldsymbol x_2)}\right]  f_Z({\boldsymbol z}) f_{X_1,X_2} ({\boldsymbol  x_1,\boldsymbol x_2})   \mathrm{d}\boldsymbol z.
\end{align}

\bibliographystyle{IEEEbib}
\bibliography{ref}

%
\end{document}